\documentclass[runningheads]{llncs}

\usepackage{graphicx}
\usepackage{url}
\usepackage{amsfonts}
\usepackage{amsmath}
\usepackage{wrapfig}
\usepackage{hyperref}
\usepackage[linesnumbered,ruled]{algorithm2e}
\usepackage[title]{appendix}
\usepackage{makecell}
\usepackage{booktabs, multirow}  
\usepackage{soul}                        
\usepackage[table]{xcolor}                
\usepackage{changepage,threeparttable}    
\usepackage{xcolor}
\usepackage[subtle]{savetrees} % compress space
\usepackage{subcaption}

\renewcommand{\min}{\mathtt{min}}
\renewcommand{\max}{\mathtt{max}}
\newcommand{\argmax}{\mathtt{argmax}}

\newcommand{\phiopt}{\phi_{\mathit{opt}}}
\newcommand{\real}{\mathbb{R}}
\newcommand{\lcf}{\Pi}
\newcommand{\selector}{\phi_k}
\newcommand{\nnctrl}{\pi_{nn}}
\newcommand{\safea}[1]{\bar{a}_{#1}}
\newcommand{\nna}[1]{a^{nn}_{#1}}
\newcommand{\transmat}[1]{\mathcal{T}_{#1}}
\newcommand{\nftransmat}[1]{\mathcal{T}_{\hat{s}_{#1}}}
\newcommand{\nmat}[1]{\mathcal{T}_{w_{#1}}}
\newcommand{\lb}[1]{L({#1})}
\newcommand{\ub}[1]{U({#1})}
\newcommand{\srs}[1]{\mathcal{R}^s_{#1}} % stochastic reachable set 
\newcommand{\rs}[1]{\mathcal{R}_{#1}} % reachable set 
\newcommand{\unsafereg}{\mathcal{S}_u}
\newcommand{\safereg}{\bar{\mathcal{S}}_u}
\newcommand{\rsafe}[1]{r_{\mathit{safe}}(#1)}
\newcommand{\rlive}[1]{r_{\mathit{live}}(#1)}
\newcommand{\dist}[1]{\mathcal{D}_{#1}}
\newcommand{\rpub}{U_f} % reachable probability upper bound
\newcommand{\cumulb}{\mathcal{L}}
\newcommand{\cumulbopt}{\cumulb_{opt}}
\newcommand{\safelb}{\hat{p}_t}
\newcommand{\sss}{\Omega} % selector search space
\newcommand{\ps}{\mathtt{PrefixSame}} 
\newcommand{\initss}{\mathcal{R}_0}
\newcommand{\vnoise}{w'}
\newcommand{\sysnoise}{w}
\newcommand{\region}[2]{\mathtt{Region}(#1, #2)}

\newtheorem{assumption}{Assumption}

\begin{document}
\title{Scalable Synthesis of Verified Controllers\\ in Deep Reinforcement Learning}
\titlerunning{Scalable Synthesis of Verified Controllers in Deep Reinforcement Learning}
% If the paper title is too long for the running head, you can set
% an abbreviated paper title here
%
 \author{Zikang Xiong \and
   Suresh Jagannathan}
% %
 \authorrunning{Z. Xiong, S. Jagannathan}
% First names are abbreviated in the running head.
% If there are more than two authors, 'et al.' is used.
%
 \institute{Purdue University, West Lafayette, IN, USA, 47906 }
\maketitle              % typeset the header of the contribution

\begin{abstract}
  There has been significant recent interest in devising verification techniques for learning-enabled controllers (LECs) that manage safety-critical systems. Given the opacity and lack of interpretability of the neural policies that govern the behavior of such controllers, many existing approaches enforce safety properties through shield, a dynamic monitoring-and-repairing mechanism that ensures a LEC does not emit actions that would violate desired safety conditions. These methods, however, have been shown to have significant scalability limitations because verification costs grow as problem dimensionality and objective complexity increase. In this paper, we propose a new automated verification pipeline capable of synthesizing high-quality safe controllers even when the problem domain involves hundreds of dimensions, or when the desired objective involves stochastic perturbations, liveness considerations, and other complex non-functional properties. Our key insight involves separating safety verification from neural controller training, and using pre-computed verified safety shields to constrain the training process. Experimental results over a range of high-dimensional benchmarks demonstrate the effectiveness of our approach in a range of stochastic linear time-invariant and time-variant systems.

  \keywords{Safe Reinforcement Learning, Controller Synthesis, Shielding,
    Cyber-Physical System Verification, Probabilistic Reachability Analysis}
\end{abstract}

\section{Introduction}

Deep Reinforcement Learning (DRL) has proven to be a powerful tool for implementing autonomous controllers for various kinds of cyber-physical systems (CPS). Since these learning-enabled controllers are intended to operate in safety-critical environments, there has been significant recent interest in developing verification methods that ensure their behavior conforms to desired safety properties \cite{alshiekh_safe_2017,anderson2020neurosymbolic,huang2019reachnn,li2020robust,tran2020nnv,zhu2019inductive}. While these different approaches all provide strong guarantees on controller safety, scaling their techniques, both with respect to problem dimensionality as well as objective complexity, has proven to be challenging. Approaches that attempt to verify that a neural controller always preserves desired safety guarantees \cite{huang2019reachnn,liu2021algorithms,tran2020nnv} face challenges to scaling to high dimensions due to the structural complexity of the neural network and increasing over-approximation error as a function of dimensionality. More importantly, the neural controller is typically trained with various performance objectives, in addition to safety~\cite{anderson2020neurosymbolic}. Balancing these competing goals of ensuring safety on the one hand and maximizing objective reward on the other poses its own set of challenges that can compromise verifiability, performance, and safety. Alternatively, a shield framework monitors controller actions and triggers a safety shield when these actions may lead to an unsafe state. By applying safety verification to a simpler linear controller that governs the behavior of this shield, we decouple safety verification from the complexity of the underlying neural network and its objectives, and can thus realize better scalability characteristics. However, a simple linear controller cannot guarantee safety in all scenarios. Thus, a composition on linear controllers is typically required. Compared with previous work~\cite{anderson2020neurosymbolic,yang2021iterative,zhu2019inductive}, which composes linear controllers in a state space, we consider composition in time as shown in Fig.~\ref{fig:overview-ab}(b). Considering composition over time is natural for time-variant systems and our experimental results show that this approach also works for benchmarks in which spatial composition is important~\cite{zhu2019inductive}. 

\begin{figure}[h]
    \centering
    \includegraphics[width=\linewidth]{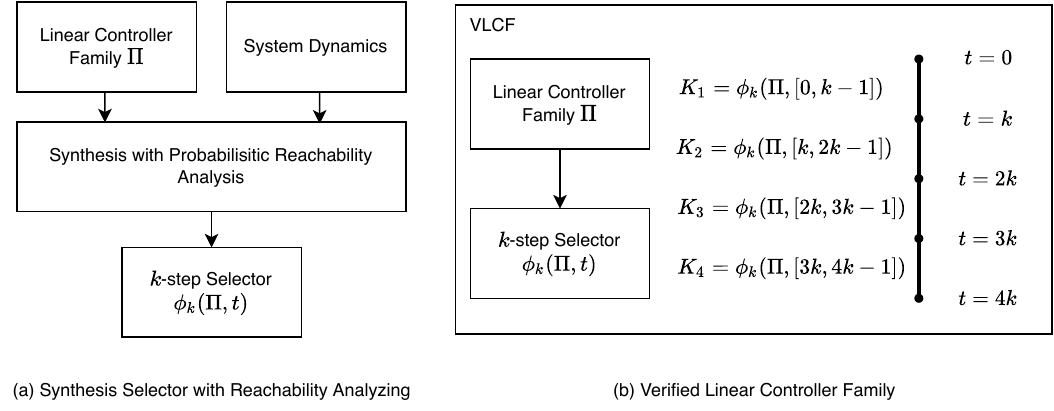}
    \caption{(a) Given a linear controller family $\lcf$ and system dynamics, we synthesize a selector $\selector(\lcf, t)$ to choose the linear controller at time $t$. The chosen linear controller acts as shield at time $t$. (b) A verified controller family contains both $\lcf$ and $\selector(\lcf, t)$; the selector chooses a linear controller $K_i$ every $k$ steps. }
    \label{fig:overview-ab}
\end{figure}

\begin{figure}[t]
    \centering
    \includegraphics[width=\linewidth]{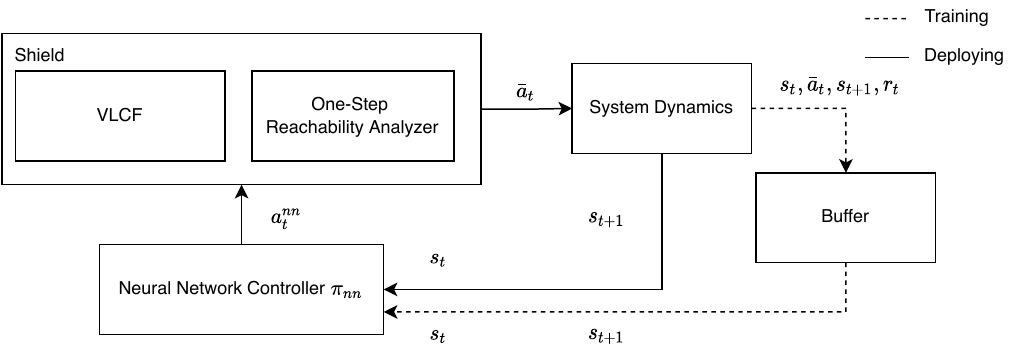}
    \caption{Shielding both training and deploying phases. The shield includes a VLCF and a one-step reachability analyzer. The neural network controller $\nnctrl$ generates an action $\nna{t}$ based on the current state $s_t$, and the one-step reachability analyzer decides whether to intervene and output a safe action $\safea{t}$, where $\safea{t}$ can either be the $\nna{t}$ or a safe action generated by the VLCF. Executing the action $\safea{t}$ generates a new system state $s_{t+1}$ and an immediate reward $r_t$. During training time, the transition $(s_t, \safea, s_{t+1}, r_t)$ is stored in a buffer for training $\nnctrl$ and the new state $s_{t+1}$ is fed to  $\nnctrl$ to predict a new action.  At deployment time, $s_{t+1}$ is fed to $\nnctrl$ directly. The shield monitors safety during both training and deploying phases.}
    \label{fig:shielding_pipeline}
\end{figure}

This paper presents a new learning and verification pipeline that addresses these challenges in stochastic linear time-invariant and time-variant systems. Similar to other shielding-based approaches \cite{alshiekh_safe_2017,anderson2020neurosymbolic,yang2021iterative,zhu2019inductive}, our work does not verify a neural controller directly. Instead, it verifies and composes a family of linear controllers, driven by a novel probabilistic reachability analysis; this verified linear controller family (VLCF) collectively serves as a shield that can dynamically enforce the safe operation of the system. Notably, our technique considers safety verification independently of complex objectives and neural network internal structure, and thus enables scalability with respect to objective complexity and problem dimensionality. In our experiments, for example, we demonstrate successful verification of CPS benchmarks with over 800 dimensions, a scale that is significantly higher than reported with existing approaches. Because the VLCF is generated based only on safety considerations, they are not intended to serve as the primary mechanism for governing the actual operation of the system, which must also take into account other performance-related objectives. We train a neural network controller to achieve these performance-related objectives, while guaranteeing its safety with VLCF. Compared with previous work~\cite{zhu2019inductive} that only considers safety post-training, we also integrate the VLCF into the neural controller training process, thus providing a complete safety-aware pipeline from training to deployment. Our work thus demonstrates a fully automated safety verification pipeline for learning-enabled controllers trained with objectives beyond safety. Our main contributions are as follows:

\begin{itemize}
    \item We propose a verification pipeline for synthesizing and deploying applications involving stochastic linear-time properties, implemented as deep neural network controllers, with both safety and performance objectives.
    \item We present a new verification approach that composes a linear controller family as a safety shield, driven by a novel probabilistic reachability analysis technique. 
    \item We evaluate our methodology on a range of high dimensional stochastic linear time-invariant and time-variant applications, and demonstrate the scalability and effectiveness of our approach.  
\end{itemize}

\section{Preliminaries}
\label{sec:prelim}

\subsection{Controller Types}

We consider two kinds of controllers. We expect the performance controller $\nnctrl$ to be a neural network controller trained by reinforcement learning algorithms. Depending on the algorithm used, $\nnctrl$ can either be deterministic or stochastic. Given a state $s_t \in \real^n$ of the system, $\pi_{nn}(s_t)$ outputs an action $a_t \in \real^m$. A deterministic linear controller family $\lcf$ is a set of linear policies $K_i \in \real^{m \times n}$, where $m$ is the action. In every $k$-unit-time interval $[\lfloor \frac{t}{k} \rfloor k, \lfloor \frac{t}{k} \rfloor k + k)$, a selector $\selector(\lcf, t)$ chooses a linear controller in $\lcf$ for predicting actions in this time interval.

\subsection{Stochastic Linear Transition System}
\label{sec:stochastic_transition_system}

A linear transition system is modeled as
\begin{equation}
  \begin{aligned}
    \left\{\begin{array}{l}\dot{s}_t=A_t \cdot s_t + B_t \cdot a_t \\ s_{t+1} = s_t + \dot{s}_t \Delta t + w \end{array}\right.
  \end{aligned}
  \label{sys}
\end{equation}
where $s_t$ is the state vector and $a_t$ is the action.  Matrices $A_t$ and $B_t$ are two matrices capturing linearized dynamics, and they are used to compute $\dot{s}_t$. $A_t$ and $B_t$ change over time in a time-variant system, and are fixed in time invariant system. Stochasticity is introduced by adding a noise (or error) term $w$ to each transition. In this paper, we consider bounded noise, that is, $w \in [\lb{w}, \ub{w}]$. Although our verification algorithm does not directly support non-linear dynamics, we note that there exists a line of work \cite{tomas2010linear} that shows how to approximate such dynamics using (time-variant) linear systems. Applying these methods to our setting enables generalization of our technique in practice. We also note that approximation error can be considered as part of stochastic noise, thus ensuring soundness even after approximation.

\subsubsection{Reachable Set}

The initial state distribution of a system is bounded $s_0 \in [\lb{s_0}, \ub{s_0}]$.  If we now have a linear controller chosen by $\selector\left(\lcf, 0\right)$, the noise-free state $\hat{s}_1$ at step 1 is
\begin{align*}
  \hat{s}_1 & = s_0 + \Delta t \cdot \left(A_0 s_{0}+B_0 \cdot \selector\left(\lcf, 0\right) s_{0}\right)      \\
            & = \left(I + \Delta t \cdot \left(A_0 + B_0 \cdot \selector\left(\lcf, 0\right)\right)\right) s_0 \\
            & = \transmat{0} s_0
\end{align*}

The state at step 1 $s_1$ after adding the noise term is $\hat{s}_1 + w$ where  $\hat{s}_1$ is the result after a linear transformation on state $s_0$. Let $\transmat{i} = I+\Delta t \cdot\left(A_t + B_t \cdot \selector\left(\lcf, i\right)\right)$. For step $t$, we have $s_{t+1} = \transmat{t} s_t + w$ and thus,
\begin{equation}
  \begin{aligned}
    \label{eq:stochastic_linear_transition}
    s_{t} & = \transmat{t-1}\left(\cdots \left(\transmat{1} \left(\transmat{0} s_{0} + w\right) + w\right) \cdots + w\right) + w \\
          & = \prod_{i=0}^{t-1} \transmat{i} s_{0} + (I + \sum_{i=1}^{t-1}  \prod_{j=i}^{t-1} \transmat{j}) w                    \\
          & = \nftransmat{t} s_{0} + \nmat{t} w
  \end{aligned}
\end{equation}

We denote the stochastic reachable set of a stochastic system at step $t$ as $\srs{t}$, $\nftransmat{t} = \prod_{i=0}^{t-1} \transmat{i}$, and $\nmat{t} = I + \sum_{i=1}^{t-1} \prod_{j=i}^{t-1} \transmat{j}$. $\srs{t}$ can be characterized by a tuple $(\nftransmat{t}, \nmat{t}, \lb{s_0}, \ub{s_0}, \lb{w}, \ub{w})$.

\subsection{Properties and Rewards}
\label{sec: properties and reward}

\subsubsection{Safety}
We require that all the possible reachable sets over time have no intersection with an unsafe region. We define the safe region as $\safereg = \{s | s \in [\lb{\safereg}, \ub{\safereg}] \}$. An unsafe region $\unsafereg$ is the complement of $\safereg$. Therefore, our desired safety property requires that $\forall t, \srs{t} \cap \unsafereg = \emptyset$. For any state $s \in \mathbb{R}^n$, we encode the safety property as a reward $\rsafe{s}$,
\begin{align}
  \rsafe{s} = \sum_{i=0}^{n-1} \left(\min\left( \max (s - \lb{\safereg}), 0\right) + \min \left(\max (\ub{\safereg} - s), 0\right)\right).
\end{align}
Since $s$, $\lb{\safereg}$, $\ub{\safereg}$ are all vectors, the function $\max$ computes the max value on every element of a vector. Any time one dimension of the state does not stay in the safe region $\safereg$, $\rsafe{s} < 0$, a penalty is ascribed that decreases the likelihood of visiting this state when training a neural network controller. 

\subsubsection{Liveness}
Our desired liveness property requires system states to keep changing. For example, we expect a robot to keep moving towards its goal and to remain above a certain speed. In this case, we can specify that the robot's speed should always be greater than a value. If the robot's speed drops below the specified value, its behavior should be penalized.  Other examples of such liveness properties may require the rate at which a tank is filled to exceed the rate at which it is emptied; a LIDAR sensor should always rotate, etc. To characterize such patterns, we define the following reward:
\begin{align*}
  \rlive{s} = \sum_{i \in \mathtt{Dim}_{\mathit{live}}} f_>(|s^{(i)}|, T^{(i)}),
\end{align*}
where
$$
  f_>(a, b) = \left\{\begin{array}{cl}
    1, & a > b \\
    0, & else
  \end{array}\right. .
$$
$T \in \mathbb{R}^n$ is a vector of thresholds. $s^{(i)}$ and $T^{(i)}$ are $i$-th element of $s$ and $T$, respectively. $\mathtt{Dim}_{\mathit{live}}$ contains all the state dimensions we want to check (e.g., the speed of a moving robot). If the absolute value of one state dimension is greater than this value, we give a positive reward. Maximizing this reward means we want as many dimensions as possible to hit the threshold.

\section{Approach}
\label{sec:approach}

\subsection{Generate Linear Controller Family}
\label{sec:generate_lcf}

  We generate a linear controller family $\lcf$ using a \emph{Linear–Quadratic Regulator}(LQR) w.r.t different cost functions and perturbed dynamics. This is different from previous works~\cite{anderson2020neurosymbolic,yang2021iterative,zhu2019inductive} that distill the linear controller family from a neural network policy. Distilling only works well when the neural network controller's objective is aligned with the safety objective. However, when the objectives become complex, they often fail to generate a verifiable linear controller, as we show in \ref{sec: pendulum_distill_challenging}. An LQR controller, on the other hand, is designed to stabilize a system, and thus generates good controllers stabilized around safe states.

  Diversity in $\lcf$ is desired because we hope to compose different linear controllers for different scenarios. For example, a drone operating in windy conditions can experience wind coming from different directions requiring different controllers to be involved to stabilize its actions according to the wind direction at a given state. We generate these controllers by perturbing the LQR cost function and the dynamics matrices $A_t$ and $B_t$.

  LQR computes an optimal linear controller by minimizing cost $J(\tau)$ over trajectory $\tau$, which is generated from a linear transition system. The cost function is
  \begin{align}
    \label{eq:lqr_cost}
    J(\tau) = \sum_{s, a \sim \tau} s^\top Q s + a^\top R a.
  \end{align}
  We randomly perturb $Q$ and $R$ to generate different linear controllers. Similarly, we can also perturb $A_t$ and $B_t$ to reflect the change of dynamics and thus generate different controllers in different scenarios. When generating linear controller family $\lcf$, we simultaneously perturb $Q, R, A_t, B_t$, which diversifies controllers in $\lcf$.

\subsection{Safety Probability of Stochastic Reachable Sets}
\label{sec:safety_probability_of_reachable_sets}
We define the safety probability of a stochastic reachable set $\srs{t}$ in this section and introduce some theoretical results that underpin our reachability analysis. Proofs for these theoretic results are detailed in \ref{sec:prob_reach_analisis}.

Suppose $f_t(s)$ is the PDF of state $s_t$. The initial state is $s_0 \sim \dist{s_0}$, where $\dist{s_0}$ is a uniform distribution. The noise is sampled from a distribution, $w \sim \dist{w}$. We assume noise is bounded (i.e., it has fixed upper and lower bounds). According to Eq.~\ref{eq:stochastic_linear_transition}, the distribution of reachable states at any given step is the linear combination of $\dist{s_0}$ and $\dist{w}$. Given a reachable set $\srs{t}$, we wish to characterize a
distribution of its safety probability $p_t$, a measure that indicates
the portion of $\srs{t}$'s ``surface'' that is safe:

\begin{align*}
  p_t(s) = \int_{s \in {\safereg \cap \srs{t}}} f_{t}(s)\ ds
\end{align*}

\begin{theorem}
  \label{trm:upper_bound}
  Suppose that $s_0$ is subject to a uniform distribution on $\left[\lb{s_0}, \ub{s_0}\right]$. Let $\delta=\ub{s_0}-\lb{s_0}$,
  \begin{align*}
    f_{t}(s) \leq \frac{1}{\left|\det\left(\nftransmat{t} \right)\right| \prod_{i=0}^{n-1} \delta_{i}}
  \end{align*}
\end{theorem}
where $\text{det}$ is the determinant of a matrix. Since the initial state $s_0$ is subject to a uniform distribution, the area of the initial reachable set is $\prod_{i=0}^{n-1} \delta_{i}$. The determinant of a linear transformation represents how much of the initial area scales after the linear transformation. Thus, $\left|\det\left(\nftransmat{t}\right)\right| \prod_{i=0}^{n-1} \delta_{i}$ computes the reachable set's area at step $t$. This theorem asserts that the probability density function $f_t(s)$ is bounded above by the reciprocal value of the area of a reachable set. Intuitively, if the area of a reachable set is large, the probability density is stretched to be small.

\begin{corollary}
  \label{trm:uniform_noise_upper_bound}
  If noise is subject to a uniform distribution, and its distribution is on $\left[\lb{w}, \ub{w}\right]$, $\delta^{\prime}=\ub{w} - \lb{w}$, then
  \begin{align}
    f_{t}(s) \leq \min \left(\frac{1}{\left|\det\left(\nftransmat{t}\right)\right| \prod_{i=0}^{n-1} \delta_{i}}, \frac{1}{\left|\det\left(\mathcal{T}_{w_t}\right)\right| \prod_{i=0}^{n-1} \delta_{i}^{\prime}}\right)
  \end{align}
\end{corollary}
Corollary~\ref{trm:uniform_noise_upper_bound} says when the noise is uniform, the probability density can be further bounded. Intuitively, large uniform noise makes the reachable set cover a larger area. Hence, the PDF is stretched to be smaller. When the noise is significant, $\prod_{i=0}^{n-1} \delta_{i}^{\prime}$ becomes large. As a result, as $\frac{1}{\left|\det\left(\mathcal{T}_{w_t}\right)\right| \prod_{i=0}^{n-1} \delta_{i}^{\prime}}$ becomes tighter, the upper bound of $f_t(s)$ becomes smaller.  Proofs of Theorem~\ref{trm:upper_bound} and Corollary~\ref{trm:uniform_noise_upper_bound} are given in \ref{apdx:upper_bound_of_pdf}.

Suppose the upper bound of $f_t(s)$ is $\rpub$. In this case, the intersection between the reachable set $\srs{t}$ and the \emph{unsafe state set} $\unsafereg$ is ${\unsafereg \cap \srs{t}}$.  The safety probability $p_t$ satisfies
\begin{align}
  p_t \geq 1 - \rpub \int_{S_u \cap \srs{t}} ds
\end{align}
We denote the safe lower bound $\hat{p}_t = 1 - \rpub \int_{S_u \cap \srs{t}} ds$.\footnote{In \ref{apdx:reachable_approximation}, we provide a computationally efficient method to compute the upper bound of $\int_{S_u \cap \srs{t}} ds$. \ref{apdx:prob_lb} provides an efficient approach to compute $\rpub$.} Let the cumulative safety lower bound be $\cumulb = \sum_{i=1}^M \safelb$. When $\cumulb = M$, the system is verified to be safe in $M$ steps. $\safelb$ is parameterized as $\safelb\left(\lcf, \selector\right)$, because generating $\srs{t}$ depends on $\lcf$ and $ \selector$.

\subsection{Synthesize Selector}
\label{sec:synthesis_selector}

In this section, we introduce a synthesis process for the selector $\selector$ over a linear controller family $\lcf$. The synthesis algorithm checks the reachable sets of every single step w.r.t different selectors (i.e., different linear controller compositions over time), computes their safety probability lower bound $\safelb$, and returns the selector with the highest cumulative safety probability lower bound $\sum_{t=1}^M \safelb$. The sketch of the synthesis process is shown in Algorithm~\ref{algo:verify_linear_controller_family_sketch}, and the full algorithm is provided in the appendix (Algorithm~\ref{algo:synthesis_phi_opt}). The input of our algorithm includes the maximum number of execution steps $M$, linear controller family $\lcf$, and the search space of the selectors $\sss$. The main body of the algorithm iterates the selector in the search space $\sss$ while cutting the search space at runtime. In lines 2-3, we compute the stochastic reachable set $\srs{t}$ described in Sec.~\ref{sec:stochastic_transition_system} and the safety probability lower bound in Sec.~\ref{sec:safety_probability_of_reachable_sets}. Line 4-6 are three strategies to cut the search space $\sss$. We provide intuitive demonstrations of the key components and the three cutting strategies in Fig~\ref{fig:verification_demonstration}.

\begin{algorithm}[!ht]
  \caption{\small Sketch of Algorithm~\ref{algo:synthesis_phi_opt}}
  \label{algo:verify_linear_controller_family_sketch}
  \SetAlgoLined
  \KwIn{$M, \lcf, \sss$}
  \KwOut{$\phiopt, \cumulbopt$}

  \For{$\selector \in \sss$}{
    Compute the stochastic reachable set $\srs{t}$; \\
    Compute the cumulative safety lower bound $\mathcal{L}$; \\
    Cut $\sss$ with $\cumulb$ and update $\cumulbopt$. \\
    Cut $\sss$ with relationship between selectors \\
    Cut $\sss$ with invariant of reachable set \\
  }
\end{algorithm}

The search space $\sss$ is a tree because we select different controllers in $\lcf$ every $k$ steps. Each node in the tree represents one controller selected from step $(d - 1) \times k$ to step $d \times k - 1$, where $d$ is the depth of the corresponding tree node. We define a function $\ps(\selector, m)$ that represents a set of selectors that have the same ancestor with $\selector$, until depth (step) $m$. This function is beneficial because we can often cut a set of selectors with the same ancestor from selector search space $\sss$.

\begin{figure}[ht]
  \centering
  \includegraphics[width=\linewidth]{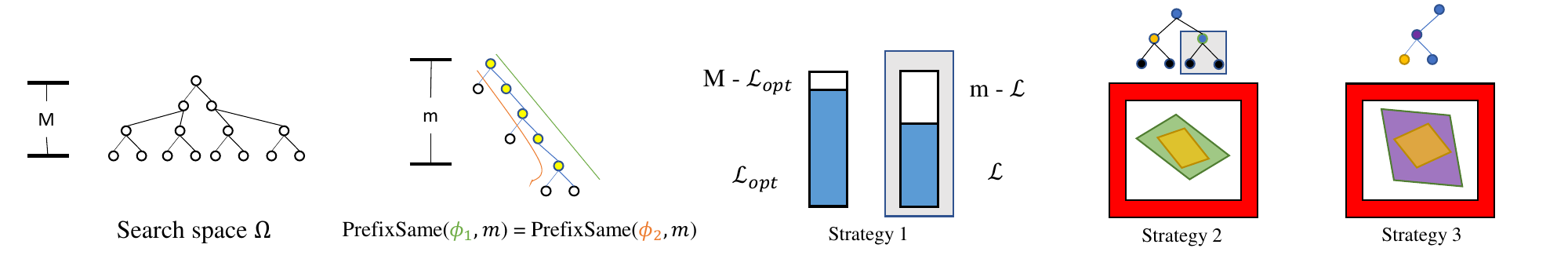}
  \caption{\small Demonstration of the key components for selector synthesis algorithm}
  \label{fig:verification_demonstration}
\end{figure}

Three strategies reduce the search space during controller selection. Strategy 1 is based on the best cumulative safety probability $\cumulb_{opt}$ recorded. We use $m - \mathcal{L}$ to represent the cumulative \emph{unsafe} probability upper bound, where $m$ is the time step of the current run, and $\mathcal{L}$ is the cumulative safety probability lower bound until step $m$. When we find that $m - \mathcal{L}$ is greater than the best cumulative unsafe probability upper bound $M - \cumulbopt$, we will not find a better cumulative safety lower bound as steps increase. Suppose that $\selector$ results in $\mathcal{L}$; we can now cut selectors $\mathtt{PrefixSame}(\selector, m)$. Strategy 2 shrinks the search space by comparing the reachable set between two selectors. For example, in Strategy 2 of Fig~\ref{fig:verification_demonstration}, the reachable set of the yellow and green nodes are colored as yellow and green, respectively. The yellow reachable set is the subset of the green one. Thus, selecting the yellow node is strictly safer than selecting the green. In this case, all the selectors that share the green node as an ancestor can be removed. Strategy 3 computes the invariant for a single selector. For example, in Strategy 3 of Fig~\ref{fig:verification_demonstration}, the yellow node and the purple node belong to the same selector. The yellow node's reachable set is a subset of the purple node's reachable set. Thus, the reachable set of this selector will shrink over time. As a result, all the reachable sets will always be the subset of the largest reachable set until the step of purple nodes. The subset relationship hence serves as an invariant. If this largest reachable set has no intersection with unsafe region $\unsafereg$, the reachable set will never intersect with the unsafe region, and thus we can directly return a verified selector.

\subsection{Shield}
\label{sec: safe_training_with_shield}

\begin{figure}
  \includegraphics[width=\linewidth]{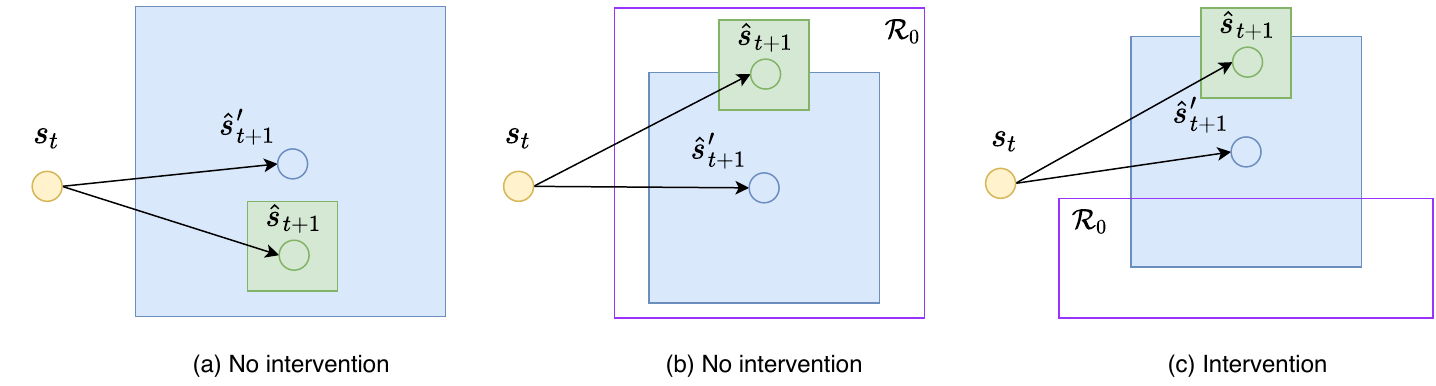}
  \caption{One-step reachability analysis of the shield. (a) The possible reachable set is inside the verified reachability set. (b) The possible reachable set slides out the verified reachability set but stays in the initial state space. (c) The possible reachable set slides out of both the verified reachability set and the initial state space. Cases (a) and (b) do not require intervention, but (c) does.}
  \label{fig:one-step-analysis}
  \vspace{-20px}
\end{figure}

  The shield has two components - the VLCF and a one-step reachability analyzer as shown in Fig.~\ref{fig:shielding_pipeline}. Integrating the linear controller family (Sec.~\ref{sec:generate_lcf}) with the synthesized selector (Sec.~\ref{sec:synthesis_selector}), gives us the VLCF. We demonstrate the one-step reachability analysis process in Fig.~\ref{fig:one-step-analysis}. The analyzer looks one step ahead of the current state. The blue box is the verified reachability set generated by the VLCF. If we take over control using the VLCF from any state of this reachable set, \emph{all states} are guaranteed to be safe over \emph{all the future steps} until max simulation length $M$, with high probability. The green box is the stochastic reachability set after taking the action generated by the neural network controller. The purple box $\initss$ is the initial state space. All the states in this set are guaranteed to be safe in all future steps under the control of VLCF. Fig.~\ref{fig:one-step-analysis} depicts 3 scenarios that a one-step analyzer can encounter. The shield only needs to intervene in the case shown in Fig.~\ref{fig:one-step-analysis}(c).

\begin{algorithm}[ht]
  \caption{\small Shield algorithm for $\nnctrl$}
  \label{algo:shield}
  \KwIn{$\nnctrl, \lcf, \phiopt, \vnoise, \sysnoise, s_t$}
  \KwOut{safe action $\safea{t}$}
  $a_{nn} \leftarrow \nnctrl(s_t)$\;
  $K \leftarrow \phiopt(\lcf)$\;
  $a_{\mathit{safe}} \leftarrow K \cdot s_t$\;
  $\hat{s}_{t+1} = dynamics(s_t, a_{nn})$\;
  $\hat{s}'_{t+1} = dynamics(s_t, a_{\mathit{safe}})$\;
  \uIf{$\mathtt{Region}(\hat{s}_{t+1}, w) \subseteq \left(\mathtt{Region}(\hat{s}'_{t+1}, w') \cup \mathcal{R}_0\right)$}{
    \KwRet $a_{nn}$\;
  } \Else{
    \KwRet $a_{\mathit{safe}}$\;
  }
\end{algorithm}

Algorithm \ref{algo:shield} summarizes the operation of a shield. Given a neural network controller $\nnctrl$, a linear controller family $\lcf$, synthesized selector $\phiopt$, the verified noise $\vnoise$, the real noise of system $\sysnoise$, and a state $s_t$, Algorithm~\ref{algo:shield} returns a shielded action $\safea{t}$. $\region{s}{w}$ in line 6 is a function computing all possible states allowed by given state $s$ and noise $w \in [\lb{w}, \ub{w}]$. $\region{s}{w}= \{s' | (s' - s) \in [\lb{w}, \ub{w}]\}$. We apply Algorithm~\ref{algo:shield} to enforce the chosen action is safe.  A particularly important instance of this approach is when $\sum_{i=1}^M \hat{p}_t = M$ as formalized by the Theorem~\ref{trm:soundness_of_shield}.

\begin{theorem}
  \label{trm:soundness_of_shield}
  (Soundness of Shield) If $\sum_{i=1}^M \hat{p}_t = M$,  $s_0 \in \mathcal{R}_0$, and $\forall t < M$, $\safea{t}$ is generated by Algorithm~\ref{algo:shield}, then
  $\forall t \leq M, s_t \notin S_u$.
\end{theorem}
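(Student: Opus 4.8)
The plan is to prove the claim by induction on $t$, strengthening ``$s_t\notin S_u$'' to an invariant that pins $s_t$ inside a reachable set the verifier has already certified to avoid $S_u$. The first step is to unpack the hypothesis $\sum_{t=1}^{M}\hat p_t=M$: since each $\hat p_t=1-U_f\int_{S_u\cap\mathcal R^s_t}ds\le 1$, a sum of $M$ such terms equals $M$ only if $\hat p_t=1$ for all $1\le t\le M$, hence $\int_{S_u\cap\mathcal R^s_t}ds=0$; so every reachable set $\mathcal R^s_t$ that $\phiopt$ induces under the verification noise $\hat w'\in[\epsilon_l^2,\epsilon_u^2]$ is disjoint from $S_u$ (and $\mathcal R^s_0=\mathcal R_0$ is too, since initial states are safe). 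I would also record the two facts the shield uses: noise domination $[\epsilon_l^1,\epsilon_u^1]\subseteq[\epsilon_l^2,\epsilon_u^2]$, which gives $\mathtt{Region}(s,\hat w)\subseteq\mathtt{Region}(s,\hat w')$ for any $s$; and the one-step closure of the verified reachable sets under the dynamics of Sec.~\ref{sec:stochastic_transition_system}, namely $\mathcal R^s_{j+1}=\mathcal T_j\mathcal R^s_j\oplus[\epsilon_l^2,\epsilon_u^2]$ with $\mathcal T_j=I+\Delta t(A_j+B_j\,\phiopt(\Pi^k_l,j))$, which follows directly from Eq.~\ref{eq:stochastic_linear_transition}.

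The inductive invariant I would carry is: for every $t\le M$ there is a step $\tau\le t$ with $s_\tau\in\mathcal R_0$ such that, re-indexing the verified reachable sets so step $\tau$ becomes step $0$, $s_t\in\mathcal R^s_{t-\tau}$; since $t-\tau\le M$ this set avoids $S_u$, giving $s_t\notin S_u$. The base case is $\tau=0$. For the inductive step, assume $s_t\in\mathcal R^s_{t-\tau}$ and let $s_{t+1}$ be produced by Algorithm~\ref{algo: shield}. If the shield emits the linear action $a'_t$, then $s_{t+1}=\mathcal T s_t+w$ with $w\in[\epsilon_l^1,\epsilon_u^1]\subseteq[\epsilon_l^2,\epsilon_u^2]$, so by closure $s_{t+1}\in\mathcal R^s_{t+1-\tau}$ and the invariant holds with the same $\tau$. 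If the shield emits the neural action $a_t$, it has checked $\mathtt{Region}(\hat s_{t+1},\hat w)\subseteq\mathtt{Region}(\hat s'_{t+1},\hat w')\cup\mathcal R_0$; the true successor lies in $\mathtt{Region}(\hat s_{t+1},\hat w)$, while $\mathtt{Region}(\hat s'_{t+1},\hat w')=\mathcal T s_t\oplus[\epsilon_l^2,\epsilon_u^2]\subseteq\mathcal R^s_{t+1-\tau}$ by closure, so $s_{t+1}$ is either in $\mathcal R^s_{t+1-\tau}$ (invariant with the same $\tau$) or in $\mathcal R_0$ (invariant re-established with $\tau=t+1$). In both cases the relative step is $\le t+1\le M$, so certified disjointness from $S_u$ applies and in particular $s_{t+1}\notin S_u$.

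The step I expect to be the real obstacle is the $\mathcal R_0$ branch. The certificate $\sum\hat p_t=M$ guarantees safety for $M$ steps only when $\phiopt$ is run from its step-$0$ configuration, whereas Algorithm~\ref{algo: shield} indexes $\phiopt$ by the absolute step; the re-indexing used above is therefore sound only if the shield restarts the selector (and the reachable-set bookkeeping) whenever the trajectory re-enters $\mathcal R_0$ --- which is exactly the content of the remark that ``any state in $\mathcal R_0$ is guaranteed safe for at least $M$ steps'', and is automatic when $k\ge M$. I would make this restart explicit in the statement of the shield before running the induction. The remaining points are routine: the passage from $\int_{S_u\cap\mathcal R^s_t}ds=0$ to $\mathcal R^s_t\cap S_u=\emptyset$ (clean when $S_u$ is open, or when the verifier certifies strict separation, as in the paper's examples), and the Minkowski-sum/interval-arithmetic identities behind the closure property, both of which are immediate from Eq.~\ref{eq:stochastic_linear_transition}.
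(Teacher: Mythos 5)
Your proposal is correct and follows essentially the same route as the paper's proof: induction on $t$, maintaining the invariant that $s_t$ lies in a verified reachable set disjoint from $S_u$, with the two shield branches and the $\mathcal{R}_0$ re-entry case handled separately. The difference is in the precision of the invariant. The paper's proof asserts $s_t \in \mathcal{R}_t^s \implies s_{t+1} \in \mathcal{R}_{t+1}^s \cup \mathcal{R}_0$ and concludes $s_t \in \bigcup_{i=0}^{t}\mathcal{R}^s_i$, indexing all reachable sets by absolute time; it never says what happens on the step \emph{after} the trajectory has fallen back into $\mathcal{R}_0$ at some $\tau > 0$, at which point the absolute-time reachable set $\mathcal{R}^s_{\tau+1}$ is not the right container for $s_{\tau+1}$. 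Your re-indexed invariant --- there exists $\tau \le t$ with $s_\tau \in \mathcal{R}_0$ and $s_t \in \mathcal{R}^s_{t-\tau}$ --- is exactly the strengthening needed to make that induction close, and your observation that it is sound only if the selector (and the reachable-set bookkeeping) restarts from its step-$0$ configuration upon re-entry into $\mathcal{R}_0$ is a genuine precondition that the paper leaves implicit, both in the proof and in Algorithm~\ref{algo: shield} (which writes $K \leftarrow \phi_{opt}(\Pi_l^k)$ with no time index at all). Your remaining caveats --- passing from $\int_{S_u \cap \mathcal{R}^s_t} ds = 0$ to $\mathcal{R}^s_t \cap S_u = \emptyset$, and the Minkowski-sum closure identity behind the one-step argument --- are also handled more explicitly in your version than in the paper, which simply states $\hat{p}_t = 1 \implies \mathcal{R}^s_t \cap S_u = \emptyset$. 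In short: same decomposition, but your write-up repairs the one step where the paper's induction is loose, at the cost of an added (and reasonable) assumption on how the shield restarts the selector.
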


\begin{proof}
  See \ref{sec: soundness_of_shield}
\end{proof}

\section{Experimental Results}
\label{sec:expt}
We have applied our verification strategy to various stochastic transition systems, whose dimensions range from 2 to 896. We associate safety constraints with each benchmark, synthesizing a verified linear controller family that seeks to guarantee these properties hold, and use that family to train a neural network with additional performance (a.k.a. liveness) objectives. The resulting system consists of a performance-sensitive neural network trained with the awareness of safety constraints, coupled with a safety shield represented by the linear controller family. We train the neural network controller using Proximal Policy Optimization (PPO)\cite{schulman1707proximal}, a widely-used training technique used in reinforcement learning.

\subsubsection{Benchmarks}
We evaluate our algorithm on 24 benchmarks. There are 6 base benchmarks - {\sf Pendulum} {\sf Cartpole}, {\sf Cartpole}, {\sf Carplatoon}, and {\sf Helicopter}, {\sf DroneInWind}. The {\sf DroneInWind} environment is time-variant because we allow the angle and the strength of the wind in the environment to change over time; the other benchmarks are time-invariant. We also consider stacking environment variants of these benchmarks named $n$-{\sf B} for assessing the effectiveness of our approach as dimensionality increases; here, $n$ is the stacking depth, and {\sf B} is one of the six base benchmarks. To make the experiments not simply exploit the safety characteristics discovered for the base program, each stacked layer is defined with a randomly injected offset that makes every stacked element different from every other one. The details of these benchmarks are provided in \ref{sec:benchmark_details}.

\subsubsection{Safe Training and Performance after Deploying}
We train a neural controller using the safety guarantees captured by our verified linear controller family. Table~\ref{tab:safe_training} demonstrates the effectiveness of our approach on overall network performance. Although the LQR controller is verified to be safe, it can perform poorly when performance objectives are taken into account. However, using it as a shield for a performant neural controller can realize both performance and safety benefits. The comparison between the rewards of the different controllers is summarized in Table~\ref{tab:safe_training}. The table presents the performance characteristics of the shielded controller relative to the base PPO algorithm without augmentation of a safety planner and the LQR family that is implemented without performance objectives. Numbers greater than one in the column labeled {\sf Shield/PPO} indicate that the controller trained in conjunction with the safety shield outperformed the PPO-only trained algorithm. A similar interpretation holds for the column labeled {\sf Shield/LQR}. The Vio. in Training column indicates the number of safety violations encountered during training - PPO trained networks exhibited a non-negligible number of safety violations on every benchmark; since our verification algorithm was able to generate a provably safe shield for each benchmark, the safety-augmented controller exhibited no violation in any of the benchmarks. In the Perf. after Deploying column, while it is not surprising that controllers trained with both safety and performance (Shield) would outperform those that are only aware of safety (LQR), it is notable that the shielded controller has a higher performance reward than the PPO-trained controller on 19 of the 24 benchmarks.

\begin{table}[thb]
  \centering
  \scriptsize
  \begin{tabular}{l|cc|cc|ccr}\toprule
    \multirow{2}{*}{Benchmarks} & \multicolumn{2}{c|}{Dimension } & \multicolumn{2}{c|}{Vio. in Training $\downarrow$} & \multicolumn{2}{c}{Perf. after Deploying $\uparrow$}                                                \\\cmidrule{2-7}
                                & State                           & Action                                            & PPO Vio.                                             & Shield Vio. & Shield/PPO    & Shield/LQR     \\\midrule
    {\sf Pendulum}              & 2                               & 1                                                 & 1437                                                 & \textbf{0}  & \textbf{2.65} & \textbf{8.59}  \\
    {\sf Cartpole}              & 4                               & 1                                                 & 959                                                  & \textbf{0}  & \textbf{1.36} & \textbf{3.54}  \\
    {\sf DroneInWind}           & 6                               & 2                                                 & 864467                                               & \textbf{0}  & \textbf{3.35} & \textbf{4.54}  \\
    {\sf Carplatoon}            & 15                              & 8                                                 & 69                                                   & \textbf{0}  & \textbf{1.83} & \textbf{30.58} \\
    {\sf Oscillator}            & 18                              & 2                                                 & 3                                                    & \textbf{0}  & \textbf{1.37} & \textbf{6.79}  \\
    {\sf Helicopter}            & 28                              & 6                                                 & 30                                                   & \textbf{0}  & \textbf{1.04} & \textbf{1.49}  \\ \midrule
    {\sf 2-Pendulum}            & 4                               & 2                                                 & 2375                                                 & \textbf{0}  & \textbf{1.77} & \textbf{2.84}  \\
    {\sf 2-Cartpole}            & 8                               & 2                                                 & 1775                                                 & \textbf{0}  & 0.64          & \textbf{2.59}  \\
    {\sf 2-DroneInWind}         & 12                              & 4                                                 & 863053                                               & \textbf{0}  & \textbf{3.14} & \textbf{4.00}  \\
    {\sf 2-Carplatoon}          & 30                              & 16                                                & 1137                                                 & \textbf{0}  & 0.76          & \textbf{11.84} \\
    {\sf 2-Oscillator}          & 36                              & 4                                                 & 46                                                   & \textbf{0}  & \textbf{1.18} & \textbf{3.33}  \\
    {\sf 2-Helicopter}          & 56                              & 12                                                & 277                                                  & \textbf{0}  & \textbf{1.07} & \textbf{1.48}  \\  \midrule
    {\sf 4-Pendulum}            & 8                               & 4                                                 & 4736                                                 & \textbf{0}  & \textbf{2.68} & \textbf{2.34}  \\
    {\sf 4-Cartpole}            & 16                              & 4                                                 & 3529                                                 & \textbf{0}  & 0.60          & \textbf{2.21}  \\
    {\sf 4-DroneInWind}         & 24                              & 8                                                 & 1748560                                              & \textbf{0}  & \textbf{3.61} & \textbf{4.08}  \\
    {\sf 4-Carplatoon}          & 60                              & 32                                                & 1863                                                 & \textbf{0}  & 0.60          & \textbf{15.94} \\
    {\sf 4-Oscillator}          & 72                              & 8                                                 & 150                                                  & \textbf{0}  & \textbf{2.20} & \textbf{3.77}  \\
    {\sf 4-Helicopter}          & 112                             & 24                                                & 405                                                  & \textbf{0}  & \textbf{1.17} & \textbf{1.33}  \\  \midrule
    {\sf 8-Pendulum}            & 16                              & 8                                                 & 11305                                                & \textbf{0}  & \textbf{1.28} & \textbf{1.93}  \\
    {\sf 8-Cartpole}            & 32                              & 8                                                 & 12680                                                & \textbf{0}  & \textbf{1.17} & \textbf{2.55}  \\
    {\sf 8-DroneInWind}         & 48                              & 16                                                & 3551103                                              & \textbf{0}  & \textbf{2.11} & \textbf{2.19}  \\
    {\sf 8-Oscillator}          & 144                             & 16                                                & 579                                                  & \textbf{0}  & \textbf{1.99} & \textbf{3.29}  \\
    {\sf 8-Helicopter}          & 224                             & 48                                                & 1388                                                 & \textbf{0}  & \textbf{1.33} & \textbf{1.24}  \\
    \bottomrule
  \end{tabular}
  \caption{\small Effectiveness of the Shield in Training and Deploying Phases }
  \label{tab:safe_training}
  \vspace{-40px}
\end{table}

\subsubsection{Verification Results}
\label{sec:detailed_verification_results}

\begin{table}[thb]
  \centering
  \setlength{\belowcaptionskip}{-5pt}
  \scriptsize
  \begin{tabular}{@{}l|cccccc|cc@{}}
    \toprule
    {\sf Benchmarks}    & \thead{{\sf state}                                                                             \\ {\sf dim}} & \thead{{\sf action} \\ {\sf dim}} & $M$ & $k$ & $w'$ & {\sf ver. time per} $\lcf$ & {\sf total ver. time} \\\midrule
    {\sf Pendulum}      & 2                  & 1  & 500  & 100 & 1.5e-2 & 0.54s $\pm$ 0.05s     & 0.82s $\pm$ 0.33s      \\
    {\sf Cartpole}      & 4                  & 1  & 500  & 100 & 3e-3   & 0.80s $\pm$ 0.26s     & 1.24s $\pm$ 0.45s      \\
    {\sf DroneInWind}   & 6                  & 2  & 1000 & 100 & 2.5e-3 & 1.39s $\pm$ 0.21s     & 1.49s $\pm$ 0.31s      \\
    {\sf Carplatoon}    & 15                 & 8  & 1000 & 100 & 2e-3   & 1.07s $\pm$ 0.25s     & 3.14s $\pm$ 4.26s      \\
    {\sf Oscillator}    & 18                 & 2  & 1000 & 100 & 4e-3   & 0.69s $\pm$ 0.08s     & 0.72s $\pm$ 0.21s      \\
    {\sf Helicopter}    & 28                 & 6  & 1000 & 100 & 2e-3   & 1.36s $\pm$ 0.42s     & 2.14s $\pm$ 0.86s      \\\midrule
    {\sf 2-Pendulum}    & 4                  & 2  & 500  & 100 & 1.5e-2 & 0.75s $\pm$ 0.20s     & 1.04s $\pm$ 0.24s      \\
    {\sf 2-Cartpole}    & 8                  & 2  & 500  & 100 & 3e-3   & 0.81s $\pm$ 0.19s     & 1.26s $\pm$ 0.69s      \\
    {\sf 2-DroneInWind} & 12                 & 4  & 1000 & 100 & 2.5e-3 & 1.62s $\pm$ 0.29s     & 1.71s $\pm$ 0.34s      \\
    {\sf 2-Carplatoon}  & 30                 & 16 & 1000 & 100 & 2e-3   & 1.32s $\pm$ 0.32s     & 2.44s $\pm$ 3.53s      \\
    {\sf 2-Oscillator}  & 36                 & 4  & 1000 & 100 & 4e-3   & 0.83s $\pm$ 0.14s     & 0.99s $\pm$ 0.45s      \\
    {\sf 2-Helicopter}  & 56                 & 12 & 1000 & 100 & 2e-3   & 4.37s $\pm$ 3.92s     & 5.54s $\pm$ 6.76s      \\\midrule
    {\sf 4-Pendulum}    & 8                  & 4  & 500  & 100 & 1.5e-2 & 0.56s $\pm$ 0.06s     & 0.62s $\pm$ 0.23s      \\
    {\sf 4-Cartpole}    & 16                 & 4  & 500  & 100 & 3e-3   & 0.97s $\pm$ 0.20s     & 1.94s $\pm$ 1.00s      \\
    {\sf 4-DroneInWind} & 24                 & 8  & 1000 & 100 & 2.5e-3 & 1.66s $\pm$ 0.31s     & 1.96s $\pm$ 0.51s      \\
    {\sf 4-Carplatoon}  & 60                 & 32 & 1000 & 100 & 2e-3   & 1.17s $\pm$ 0.20s     & 1.83s $\pm$ 0.61s      \\
    {\sf 4-Oscillator}  & 72                 & 8  & 1000 & 100 & 4e-3   & 1.34s $\pm$ 0.27s     & 1.52s $\pm$ 0.39s      \\
    {\sf 4-Helicopter}  & 112                & 24 & 1000 & 100 & 2e-3   & 8.30s $\pm$ 3.33s     & 8.28s $\pm$ 3.99s      \\\midrule
    {\sf 8-Pendulum}    & 16                 & 8  & 500  & 100 & 1.5e-2 & 0.97s $\pm$ 0.13s     & 0.94s $\pm$ 0.47s      \\
    {\sf 8-Cartpole}    & 32                 & 8  & 500  & 100 & 3e-3   & 1.18s $\pm$ 0.33s     & 1.45s $\pm$ 0.82s      \\
    {\sf 8-DroneInWind} & 48                 & 16 & 1000 & 100 & 2.5e-3 & 2.08s $\pm$ 0.22s     & 2.38s $\pm$ 0.30s      \\
    {\sf 8-Carplatoon}  & 120                & 64 & 1000 & 100 & 2e-3   & 4.03s $\pm$ 0.75s     & 5.06s $\pm$ 1.86s      \\
    {\sf 8-Oscillator}  & 144                & 16 & 1000 & 100 & 5e-3   & 5.78s $\pm$ 1.58s     & 38.60s $\pm$ 66.35s    \\
    {\sf 8-Helicopter}  & 224                & 48 & 1000 & 100 & 2e-3   & 45.03s $\pm$ 31.45s   & 70.11s $\pm$ 92.06s    \\\midrule
    {\sf 16-Helicopter} & 448                & 16 & 1000 & 100 & 2e-3   & 164.05s$\pm$ 36.52s   & 458.17s$\pm$ 343.75s   \\
    {\sf 32-Helicopter} & 896                & 32 & 1000 & 100 & 2e-3   & 2115.20s$\pm$ 1090.29 & 2962.55s$\pm$ 2907.33s \\
    {\sf 64-Helicopter} & 1792               & 64 & 1000 & 100 & 2e-3   & TO                    & TO                     \\
    \bottomrule
  \end{tabular}
  \caption{\small Verification results with timeout set to 1 hour. The {\sf state dim} and {\sf action dim} columns denote the number of dimensions in the state and action space of the benchmark, respectively.  Stacking system benchmarks $n$-* have $n$ times the number of dimensions as their single system counterpart. $M$ is the number of execution time steps considered. The time interval between choosing a new linear controller is $k$. We use $w'$ to represent the verified max noise for each step in this system. The {\sf verification time per} $\lcf$ is the time for verifying a single linear controller family, while {\sf total ver. time} indicates the time to find a controller combination that is guaranteed to be safe (i.e., $\sum_{t=1}^M \hat{p}_t = M$).}
  \label{tab:verification}
\end{table}

The verification results for our benchmarks are shown in Table~\ref{tab:verification}. Although our system supports safety guarantees with probabilistic bounds shown in Table~\ref{tab:verification} were fully verified. In ~\ref{sec:probabilistic_guarrantee_exps}, we present results that verify safety under probabilistic guarantees; these experiments require increasing system stochasticity (term $\omega$ in Equation~\ref{sys}). In experiments of Table~\ref{tab:verification}, the probability safety lower bound was used to prune the search space explored by Algorithm~\ref{algo:synthesis_phi_opt}. The noise term $w'$ is the most extensive noise term we verified.

% To achieve $100\%$ safety, we iteratively run Algorithm~\ref{algo:synthesis_phi_opt} on different linear controller families. Because the linear controller family is generated with randomly sampled $Q$ and $R$ values, the quality of its controllers is hard to guarantee. Thus, if we cannot find a verified selector which maximizes the sum of $p_t$ to $M$ after checking at most 100 possible choices for $\selector$, we regenerate a linear controller family by sampling new $Q$ and $R$ values and re-running Algorithm~\ref{algo:synthesis_phi_opt} on it again.  For each run, if the controller is not fully verified after we check 100 possible choices for $\selector$, we repeatedly generate a new LQR controller family and run  Algorithm~\ref{algo:synthesis_phi_opt} until we get a fully verified linear combination.

We run the verification algorithm 10 times on each benchmark. For each run, if the controller is not fully verified after we check 1000 possible choices for $\selector$, we repeatedly generate a new LQR controller family and run  Algorithm~\ref{algo:synthesis_phi_opt} until we get a fully verified linear combination. The number before $\pm$ in the last 2 columns signifies the mean of our results - we run the verification algorithm 10 times for each benchmark; the number after $\pm$ is the standard deviation. The verification time per $\lcf$ column contains the running time of generating an LQR linear controller family with 10 potential controllers and running Algorithm~\ref{algo:synthesis_phi_opt} once.

Dimensionality is not the only feature that affects verification time. Different safety properties and system dynamics can also play a role here. For example, {\sf 8-Carplatoon} has 120 dimensions but only requires 4.03 seconds to verify on average, while the {\sf 4-Helicopter} benchmark with 112 dimensions requires 8.30 seconds to verify. For {\sf Pendulum} and its stacked systems, verification time per linear controller family is close to total verification time, implying that there was little need to regenerate new controller instantiations. For more complicated benchmarks such as {\sf 8-Oscillator} and {\sf 8-Helicopter}, the linear controller families needed to be regenerated more often, increasing the total verification time. Nonetheless, verification times, even for challenging benchmarks like {\sf 32-Helicopter} with 896 dimensions, required less than 1 hour on average; {\sf 8-Helicopter} with 224 required 70.11 secs to verify on average. The largest benchmark we can verify in the one-hour time limit is the {\sf 32-Helicopter} with 896 state dimensions. However, we did not include {\sf 16-Helicopter} and {\sf 32-Helicopter} in the safe training experiment in Table~\ref{tab:safe_training}. This is because such high dimension models are challenging for a deep reinforcement learning algorithm to find a reasonable controller in one hour, and do not reflect a limitation of our methodology.

\subsubsection{Synthesis Time and Comparison}

We compared our work with other shield-based approaches such as~\cite{zhu2019inductive,yang2021iterative}, where the authors also verified a linear controller family with the barrier-certificate-based approach and a counter-example guided inductive synthesis (CEGIS) loop. The barrier-certificate-based approach is widely used for polynomial dynamics. However, in the stochastic linear dynamic system that we are analyzing, its scalability is limited. We compared our verification algorithm with the tool provided in~\cite{zhu2019inductive} on 20 time-invariant benchmarks. The results are presented in Fig.~\ref{fig:compare_with_barrier_certificate}. Our algorithm is significantly faster than their barrier-certificate-based approach. On 13 of the 20 benchmarks, their tool was unable to find verified controllers within a one-hour time limit. \cite{sreach} supports stochastic and potentially time-variant systems. However, it was only able to verify a single controller from this benchmark set. As \cite{zhu2019inductive,yang2021iterative,anderson2020neurosymbolic} pointed out, in a learning-enabled system, a single verified controller is usually not sufficient to build a shield to guard the safety of the entire state space.

\begin{figure}
  \centering
  \includegraphics[width=0.8\linewidth]{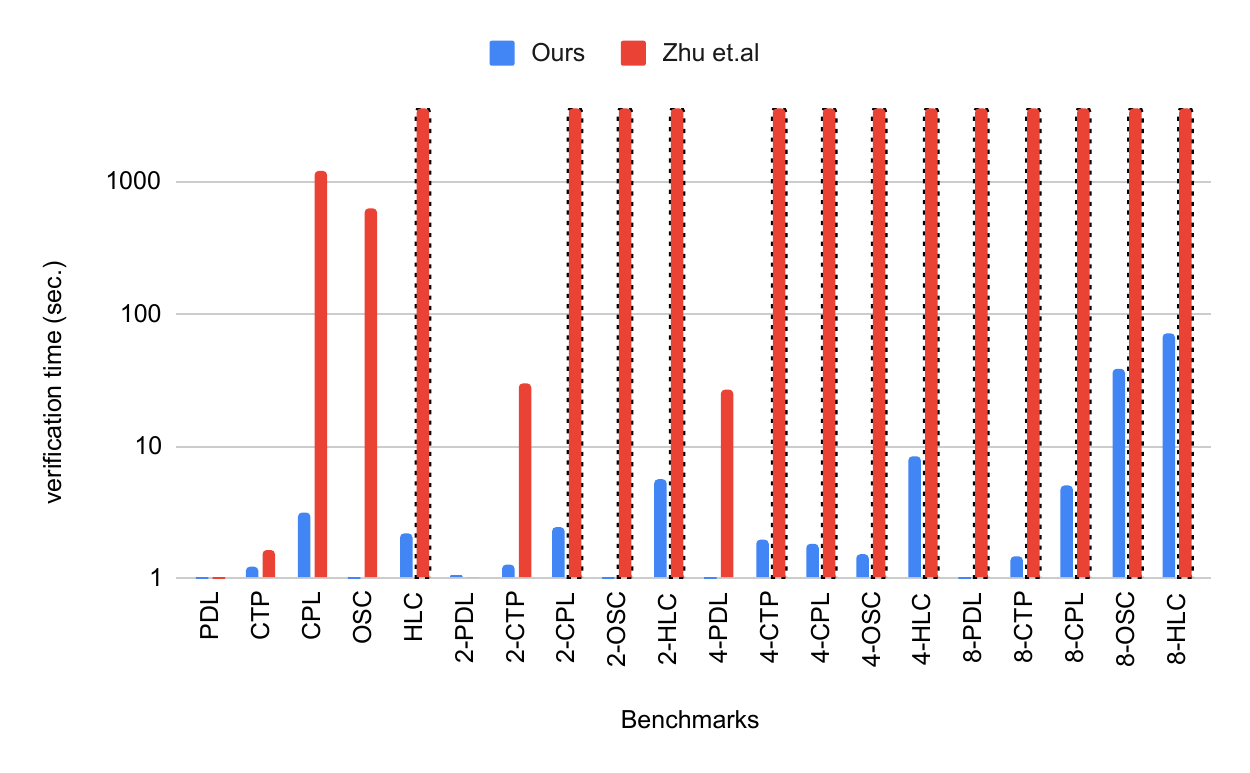}
  \caption{\small Comparison of synthesis time between our work and \cite{zhu2019inductive,yang2021iterative}. The black dash frame means that verifier is unable to return a feasible solution within a one-hour time limit. The abbreviations' meanings are as follows: {\sf Pendulum}(PDL), {\sf Cartpole}(CTP), {\sf Carplatoon}(CPL), {\sf Oscillator}(OSC), {\sf Helicopter}(HLC). }
  \label{fig:compare_with_barrier_certificate}
  
\end{figure}

\section{Related Work}
\label{sec:related}

There has been significant recent interest in exploring techniques to enhance the safety of learning-enabled systems. Techniques described in~\cite{paulsen2020reludiff,singh2019abstract,reluplex,Marabou} define suitable verification methodologies that are capable of providing stronger guarantees for open-loop LECs. For the closed-loop systems discussed in our work, \cite{achiam2017constrained,berkenkamp2017safe,Garcia2015ACS} focus on specifying controller safety as an additional reward. By changing the reward, these methods seek to increase the safety characteristics of the learnt controller. These approaches are different from ours insofar as we consider provably verifiable methods applied independently of a training-based reward framework. \cite{tran2020nnv,huang2019reachnn,verisig,dutta2019reachability,lin2020art} verify a neural network directly.  However, the complexity of the networks, the amount of computation required, and the approximation introduced during the verification process makes these methods difficult to scale to high dimension problems.

Another line of work explores verifiability by applying imitation learning techniques on the subject networks \cite{zhu2019inductive,anderson2020neurosymbolic,yang2021iterative}. These approaches also consider composing a controller family to synthesize a shield. Compared with our work, one significant difference with \cite{zhu2019inductive,anderson2020neurosymbolic,yang2021iterative} is that they choose different controllers based on the system's spatial state. However, in our approach, we select a new controller for every $k$ steps in a trajectory. Hence, our controller selection process is based on temporal behavior. Moreover, since \cite{zhu2019inductive,anderson2020neurosymbolic,yang2021iterative} align a simple imitated controller, which is heavily biased towards safety considerations, with the complex neural controller that also considers performance objectives, scalable verification is challenging, especially when sophisticated performance objectives must be realized.

There also exist tools for synthesizing safe controllers without considering them as shields \cite{sreach,soudjani2015fau}. Similar to \cite{sreach}, our verification algorithm supports linear, time-varying, discrete-time systems that are perturbed by a stochastic disturbance, but our algorithm is demonstrably more scalable. Reinforcement learning algorithms can generally support complex properties defined with various objectives. For example, \cite{camacho2019ltl,toro2018teaching} encode LTL specifications into rewards and train neural controllers with reinforcement learning. However, simply encoding specifications as rewards cannot provide any guarantee on ensuring critical safety properties are preserved. In contrast, our methodology provides desired verifiable results, exploiting the capability of learning other complex properties using standard reinforcement learning techniques. There also exists approaches that consider falsification methods~\cite{dreossi-jar19,dreossi2015efficient,c2e2,deepexplore} that aim to find potential unsafe safes in CPS systems. They can work with complex specifications and high dimensions systems. However, they do not provide provably verifiable guarantees. Another related line of work uses contraction metrics to co-learn controllers and certificates~\cite{sun2020learning}, combining them with Lyapunov certificates. Pursuing this line of work is a topic of future research.

\section{Conclusion}
\label{sec:conclusion}

In this paper, we present a new pipeline that synthesizes a neural network controller with explicit safety guarantees. First, we propose a linear controller family intended to stabilize a system. Then, we verify this family with respect to these safety properties. This verified linear controller family is used in network training and additionally ensures the deployed controller does not violate safety constraints. Because safety verification is decoupled from the training process, our approach has pleasant scalability characteristics that are sensitive to performance objectives. In addition, because we inject the shield into the learning process, the resulting controller is trained with safety considerations in mind, yielding high-quality verified learning-enabled controllers that often outperform their non-verified counterparts. The key insight of our work is that we can decouple properties relevant for learning from those necessary for verification, yielding significant scalability benefits without sacrificing correctness guarantees.

% \section*{Acknowledgement}
% This work was supported in part by C-BRIC, one of six centers in JUMP, a Semiconductor Research Corporation (SRC) program sponsored by DARPA.

\bibliographystyle{splncs04}
\bibliography{ref}

\newpage
\appendix
\gdef\thesection{Appendix \Alph{section}}

\section{Challenge of Distilling Verified Policies}
\label{sec: pendulum_distill_challenging}

\begin{figure}
    \centering
    \includegraphics[width=0.6\linewidth]{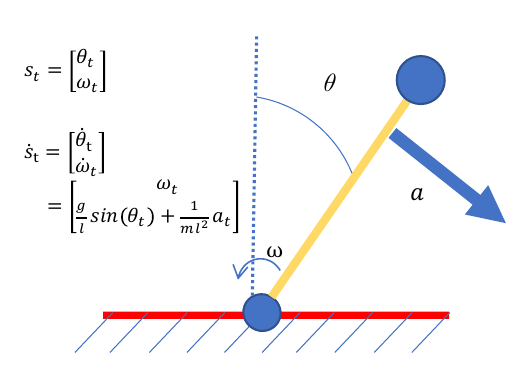}
    \caption{{Demonstration on \sf Pendulum}. We hope to keep the pendulum above the red horizontal line (i.e, $-\frac{\pi}{2} < \theta < \frac{\pi}{2}$), while keep moving with angle velocity that is greater than 0.1 (i.e., $|\omega| > 0.1$).}
    \label{fig:pendulum}
\end{figure}

Previous work~\cite{zhu2019inductive,anderson2020neurosymbolic,yang2021iterative} in shield-based DRL verification distill linear controllers by imitating the neural network, with safety being a primary goal of the learnt controller. In practice, however, reinforcement learning tasks need to consider different non-functional properties and safety. Take the {\sf Pendulum} in Fig.~\ref{fig:pendulum} as an example; we want to simultaneously maximize the velocity of the pendulum while also preserving safety. In Table~\ref{tab:distill_challenge}, we show that a safety controller is distilled from the neural network trained with different rewards that impact their verifiability. We set two different reward functions for the {\sf Pendulum} task in this experiment. In the first setting, we only consider the safety reward (i.e., $\rsafe{s}$); the other considers both liveness and safety (i.e., $\rsafe{s} + \rlive{s}$). For neural network controller trained under these two settings, we distill 50 different linear controllers and verify them with the verification tool provided in \cite{zhu2019inductive}. The $\mathtt{Verified}$ column shows the number of controllers among the 50 distilled controllers considered that were verified to be safe. These results show that when both properties are considered, safety verification becomes significantly more challenging, even for problems as simple as the {\sf Pendulum}.

\begin{table}[h]
    \centering
    \begin{tabular}{@{}lc@{}}
        \toprule
        Reward                 & $\mathtt{Verified}$ \\ \midrule
        Safety rew. only       & $35$                \\   %$34.6  $\pm$  1.36$ \\
        Liveness + safety rew. & $0$                 \\\bottomrule
    \end{tabular}
    \caption{\small The number of controllers that were verified from a total of
        50 distilled under different rewards.}
    \label{tab:distill_challenge}
\end{table}

\section{Verify Linear Controller Family}
\label{sec: verify_linear_controller_family_detailed}

Our verification algorithm aims to maximize the safety probability of the reachable state at each step.  If $\safelb$ is always 1, that is $\srs{t}$ only defines safe elements; then its sum is equal to the total number of steps $M$ in a trajectory.  In this case, the controller is guaranteed to be safe.  We thus seek a  selector $\phiopt$ such that it maximizes the lower bound of $\cumulb = \sum_{t=1}^M \safelb(\lcf, \selector)$: 
\begin{align*} 
    \phiopt = \underset{\selector}{\argmax} \sum_{t=1}^M \safelb(\lcf, \selector) 
\end{align*} 
Suppose that the number of policies in $\lcf$ is $\left| \lcf \right|$.  Finding the optimal $\phiopt$ using a brute-force approach would require traversing all the $\left| \lcf \right|^{\lceil \frac{M}{k} \rceil}$ possible combinations (i.e., all the possible $\selector$) in the worst case.  To improve on this, we consider three pruning strategies. 

First, we keep track of the largest cumulative lower bound of all controllers $\cumulbopt$ for all visited $\selector$. If $\cumulbopt$ reaches $M$, we can terminate immediately, and return the current $\selector$.  We say two selectors have the same prefix at step $t$ if all their selections up to step $t$ are identical.  For a given step $m$, $\ps(\selector, m)$ denotes the set of selectors that have the same prefix as $\selector$. Line 14 to line 17 of Algorithm~\ref{algo:synthesis_phi_opt} describes the first strategy. If there exists a constant $m < M, s.t.\ \left(m - \sum_{t=1}^m \safelb(\lcf, \selector)\right) > (M - \cumulbopt)$, $\ps(\selector, m)$ can be removed from the search space, reducing the number of selectors that need to be considered during verification. 

Second, we store all the $k$-th reachable sets for each $k$ steps. The reachable sets at step $t$ is denoted by $\mathtt{LayerRSet}_t$. Consider two reachable sets $\mathcal{R}^1_k \in \mathtt{LayerRSet}_t$ an $\mathcal{R}^2_k \in \mathtt{LayerRSet}_t$, which are generated by different selectors. Suppose $\mathcal{R}^1_k \subseteq \mathcal{R}^2_k$ and $\selector$ generates $\mathcal{R}^2_k$.  Now, all selectors in $\ps(\selector, k)$ can be removed from the search space. This is because a smaller reachable set is always safer than a larger one. The second strategy corresponds to line 18 to line 26 in the algorithm. Line 20 and line 23 update $\mathtt{LayerRSet}_t$ if any subset relationship is found between the elements of $\mathtt{LayerRSet}_t$. Line 21 and 25 prune the search space $\sss$.

\begin{algorithm}[htbp]
    % \DontPrintSemicolon
    \SetAlgoLined
    \KwIn{$M, \lcf, \sss$}
    \KwOut{$\phiopt, \cumulbopt$}

    $\cumulbopt \leftarrow 0$\;
    \tcp{$\mathtt{LayerRSet}$ stores reachable sets in the same depth}
    Initialize $\mathtt{LayerRSet}_0, \mathtt{LayerRSet}_k, \dots, \mathtt{LayerRSet}_M $ to $\emptyset$

    \For{$\selector \in \sss$}{
        $\cumulb \leftarrow 0$\;
        $\mathtt{PhiRSet} \leftarrow \emptyset$\;
        \For{$i \leftarrow 0$ \KwTo $\lfloor \frac{M}{k}\rfloor$}{
            % \If{$i = \frac{M}{k}$}{break\;}
            $m \leftarrow k (i + 1)$\;
            \For{$j \leftarrow ki$ \KwTo $\min(M, m) - 1$}{
                $\cumulb \leftarrow \cumulb + \hat{p}_j(\lcf, \selector)$\;
            }

            Compute $\srs{m}$\;
            $\mathtt{LayerRSet}_m \leftarrow \mathtt{LayerRSet}_m \cup \{\srs{m}\}$\;
            $\mathtt{PhiRSet} \leftarrow \mathtt{PhiRSet} \cup \{\srs{m}\}$\;

            \tcp{1st strategy, keeping the optimal cumulative lower bound.}
            \If{$m-\cumulb > M-\cumulbopt$}{
                $\sss \leftarrow \sss / \ps(\selector, m)$  \tcp*{Remove $\ps(\selector, m)$ from $\sss$}
                break\;
            }

            \tcp{2nd strategy, cutting among same-layer selectors}
            \For{$\mathcal{R}' \in \mathtt{LayerRSet}_m$}{
                \uIf{$\mathcal{R}' \subset \srs{m}$ }{
                    $\mathtt{LayerRSet}_m \leftarrow \mathtt{LayerRSet}_m / \srs{m}$\;
                    $\sss \leftarrow \sss / \ps(\selector, m)$\;
                }\uElseIf{$\srs{m} \subset \mathcal{R}'$}{
                    $\mathtt{LayerRSet}_m \leftarrow \mathtt{LayerRSet} / \mathcal{R}'$\;
                    Query the $\selector'$ computing $\mathcal{R}'$\;
                    $\sss \leftarrow \sss / \ps(\selector', m)$\;
                }
            }

            \tcp{3rd strategy, invariant of reachable set. }
            \For{$\mathcal{R}' \in \mathtt{PhiRSet}$}{
                \If{$\srs{m} \subseteq \mathcal{R}'$}{
                    Compute $\phi_{inv}$\;
                    \KwRet $\phi_{inv}, M$\;
                }
            }
        }

        \If{$\cumulb > \cumulbopt$}{
            $\cumulbopt = \cumulb$,
            $\phiopt = \selector$\;
        }
        \If{$\cumulb = M$}{
            \KwRet $\phiopt, M$\;
        }
    }
    \KwRet $\phiopt, \cumulbopt$\;
    \caption{Synthesis algorithm for $\phiopt$}
    \label{algo:synthesis_phi_opt}
\end{algorithm}

Finally, we keep the reachable set $\srs{t}$ generated by $\selector$ for every $k$ steps; this set is denoted by $\mathtt{PhiRSet}$ in Algorithm~\ref{algo:synthesis_phi_opt}. $\mathtt{PhiRSet}_t$ is the stochastic reachable set generated by a selector $\selector$ at step $t$. $\mathtt{PhiRSet}_t$ is an invariant set if $\exists t' < t$, $\mathtt{PhiRSet}_t \subseteq \mathtt{PhiRSet}_{t'}$.  At step $t$, if the cumulative lower boundary $\cumulb$ is $t$ (i.e., the probability of safety violation is 0 up to this step), we can return this bound as $M$. Meanwhile, we construct a desired selector $\phi_{inv}$ by keeping the prefix of $\selector$ before step $t$, letting the action of the linear controller at step $t$ govern future steps. This strategy is shown in lines 27 to 32.

The overall structure of the algorithm takes the max simulation step $M$, the linear controller family $\lcf$, and the search space $\sss$ containing all selectors $\selector$ as inputs. It returns the optimal selector $\phiopt$ that maximizes $\cumulbopt = \sum_{t=1}^{M} \safelb \left(\lcf, \phiopt\right)$. The outer for-loop at line 3 traverses all possible selectors.  However, when running, the search space will be pruned, and thus not all the selectors will be visited. As shown in Table~\ref{tab:verification}, we typically can find an optimal selector $\phiopt$ with checking less than $2000$ selectors (which can be estimated with total verification time devided by the time verify per $\lcf$). Considering that the search space containts $10^{5}$ or $10^{10}$ (i.e., $|\lcf|^{\lceil \frac{M}{k} \rceil}$) total combinations, these purning strategies are very efficient.

Given a selector $\selector$, the for-loop at line 6 checks every $k$ steps of the simulation.  $\cumulbopt$ is initialized as 0. If we find a better $\cumulb$ at the end of the loop, it is updated at line 33. From lines 8 to 10, the algorithm computes a cumulative $p_t$ from step $k\cdot i$ to step $\min(M, m) - 1$, and accumulates the safety probability lower bound $\safelb$ for step $k\cdot i$ to step $\min(M, m) - 1$.  Lines 11 to 13 computes $\srs{m}$, and then adds $\srs{m}$ to $\mathtt{LayerRSet}_{m}$ and $\mathtt{PhiRSet}_{m}$. Line 11 computes $\srs{t}$. The parts of the algorithm involving various optimization strategies have been described above.

\section{Benchmark Details}
\label{sec:benchmark_details}

\begin{wrapfigure}{R}{0.5\textwidth}
    \includegraphics[width=\linewidth]{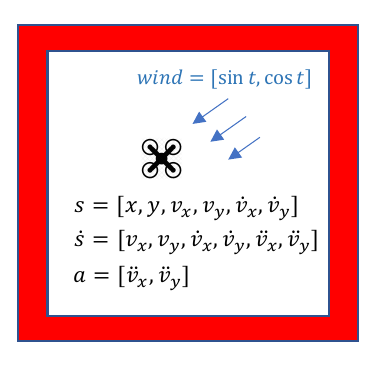}
    \caption{{\sf DroneInWind} benchmark}
    \label{fig:drone_in_wind}
\end{wrapfigure}

Benchmarks {\sf Pendulum}, {\sf Cartpole}, {\sf Carplatoon}, and {\sf Helicopter} are adapted from \cite{DBLP:conf/cav/FanMM018}; {\sf Oscillator} comes from \cite{jordan2007nonlinear}. {\sf DroneInWind} is built by us; an illustrative figure is provided in Fig.~\ref{fig:drone_in_wind}. The state of the drone has 6 dimensions, including position, velocity, and acceleration on a 2D grid. The control signal of the drone is the change of acceleration in the x-y direction. The acceleration is bounded to be smaller than $5 m/s^2$. The drone is not allowed to hit the red wall, and we also limit its speed should be lower than $2 m/s$. Additionally, we require the drone to keep moving within the safety boundary. We model wind with a 2D vector $\left[ \mathtt{sin}\ t, \mathtt{cos}\ t\right]$. Because the wind speed can change the drone's state $s$, we incorporate it into the $A_t$ matrix computing the $\dot{s}$. Because wind changes over time, $A_t$ changes over time as well. As a result, we have a time-variant system. {\sf Pendulum} and {\sf Cartpole} are two classical control models. We have discussed {\sf Pendulum} in detail earlier. {\sf Cartpole} is a control system for a moving cart with a vertical pole; a safety property requires the cart to move without causing the pole to fall. {\sf Carplatoon} models 8 vehicles forming a platoon, maintaining a safe relative distance among one another. {\sf Oscillator} consists of a two-dimensional switched oscillator plus a 16-order filter. The filter smoothens the input signals and has a single output signal; the safety property requires the output to remain below a threshold. {\sf Helicopter} provides a longitudinal motion model of a helicopter; its safety constraint requires that it operate within a specified region.

The other benchmarks in our suite are \emph{stacked} from the first 6 systems but given different safety properties. Perturbations are also added to different stacking elements to yield different behaviors. The prefix number denotes the number of systems stacked. Given a stochastic transition system as defined in Sec.~\ref{sec:stochastic_transition_system}, we stack the $A$, $B$ matrices of the linear control system as diagonal elements of a large matrix. For example, for {\sf 2-Pendulum}, we stack $A, B \in \mathbb{R}^{2 \times 2}$ thus:

\begin{align*}
    A' = \left[\begin{array}{cc}
            A          & \mathbf{0} \\
            \mathbf{0} & A
        \end{array}\right],
    B' = \left[\begin{array}{cc}
            B          & \mathbf{0} \\
            \mathbf{0} & B
        \end{array}\right] \cdot P_B
\end{align*}

Here $A'$ and $B'$ are {\sf 2-Pendulum}'s transition matrices and $A$ and $B$ come from the {\sf Pendulum}. The diagonal elements in the perturbation matrices $P_B \in \mathbb{R}^{4\times4}$ are sampled randomly from the range $[0.95, 1.05]$; all non-diagonal elements are 0.  Because we need to compute $B'\cdot P_B \cdot a_t$, $P_B$ perturbs the input action $a_t$ by a scaling factor.  Similarly, we also stack safety constraints. For example, the safety constraints of {\sf 2-Pendulum} have lower bound $L' \in \mathbb{R}^4$ and upper bound $U' \in \mathbb{R}^4$. The {\sf Pendulum} has lower bound $L \in \mathbb{R}^2$ and upper bound $U \in \mathbb{R}^2$,

\begin{align*}
    L' = P_L \odot \left[ L, L\right], U' = P_U \odot \left[ U, U\right]
\end{align*}

$P_L, P_U \in \mathbb{R}^4$ and their elements are sampled from $[0.95, 1.05]$. Meanwhile, we ensure that every element of $P_L$ is smaller than $P_U$. $\odot$ denotes element-wise multiplication. While we could apply our technique to each component of these stacked systems individually, we evaluate our approach on the high-dimensional case to demonstrate the scalability of our algorithms.

\section{Soundness of Shield}
\label{sec: soundness_of_shield}

\noindent\textbf{Theorem. 1} \textit{(Soundness of Shield) If $\sum_{i=0}^M \safelb = M$,  $s_0 \in \rs{0}$, and $\forall t < M$, $\bar{a}_t$ is generated by Algorithm~\ref{algo:shield}, then $\forall t \leq M, s_t \notin \unsafereg$.}

\begin{proof}
    Supposing the initial state of a system is $s_0$, $\forall s_0 \in \rs{0}$, based on Algorithm~\ref{algo:shield}, if $\region{\hat{s}_{1}} {w} \subseteq \region{\hat{s}'_{1}}{w'} \cup \rs{0}$, the action $a_{nn}$ will be executed. Otherwise, the $a_{safe}$ will be executed. This execution will ensure the next state $s_1$ lands in the reachable set $\srs{1} \cup \rs{0}$, where $\srs{1}$ is the stochastic reachable set of the selected linear controller. Thus,
    \begin{align}
        s_0 \in \rs{0} \implies s_{1} \in \mathcal{R}_{1}^s \cup \rs{0}.
    \end{align} 
    Similarly, for the state $s_t$ at step $t$ and state $s_{t+1}$ at step $t+1$, we have,
    \begin{align}
        \forall t < M, s_t \in \srs{t} \implies s_{t+1} \in \mathcal{R}_{t+1}^s \cup \rs{0}
    \end{align}
    By induction, $\forall t \leq M, s_0 \in \rs{0} \implies s_t \in \bigcup_{i=0}^t\srs{i}$. We select $\phiopt$ that lets $\sum_{i=0}^M \safelb = M$. Thus, all the reachable sets before time step $M$ has no overlapping with $\unsafereg$. $\forall t \leq M, \srs{t} \cap \unsafereg = \emptyset$. Algorithm~\ref{algo:shield} ensures that $\forall t \leq M, s_t \in \bigcup_{i=0}^t\srs{i}$. Thus, 
    \begin{align}
        \forall t \leq M, s_t \in \bigcup_{i=0}^t\srs{i}, \srs{t} \cap \unsafereg = \emptyset\implies s_t \not\in \unsafereg
    \end{align}.
    \qed
\end{proof}

\section{Probabilistic Guarantee Experiments}
\label{sec:probabilistic_guarrantee_exps}
When the noise is large, the system may not be able to be fully verified. Our approach can provide a probabilistic guarantee on the safety lower bound. We demonstrate such guarantee in {\sf Pendulum} and {\sf DroneInWind}, as shown in Table~\ref{tab:probabilistic_guarrantee_exps}.

\begin{table}[!htp]\centering
    \caption{Probabilistic Guarantee Experiments}\label{tab:probabilistic_guarrantee_exps}
    \scriptsize
    \begin{tabular}{l|cc|cc|ccc}\toprule
        \multirow{2}{*}{Benchmarks}        & \multirow{2}{*}{$w'$} & \multirow{2}{*}{$\cumulbopt$} & \multicolumn{2}{c|}{Vio. durining Training $\downarrow$} & \multicolumn{2}{c}{Perf. after Deploying $\uparrow$}                                 \\\cmidrule{4-7}
                                           &                       &                                         & PPO Vio.                                                 & Shield Vio.                                          & Shield/PPO    & Shield/LQR    \\\midrule
        \multirow{2}{*}{{\sf Pendulum}}    & 1.5e-2                & 1                                       & 1437                                                     & \textbf{0}                                           & \textbf{2.65} & \textbf{8.59} \\
                                           & 1.75e-2               & 0.97                                    & 1890                                                     & \textbf{13}                                          & \textbf{3.14} & \textbf{8.67} \\ \midrule
        \multirow{2}{*}{{\sf DroneInWind}} & 2.5e-3                & 1                                       & 864467                                                   & \textbf{0}                                           & \textbf{3.35} & \textbf{4.54} \\
                                           & 3e-3                  & 0.95                                    & 1073107                                                  & \textbf{1076}                                        & \textbf{4.12} & \textbf{5.14} \\
        \bottomrule
    \end{tabular}
\end{table}

As we increase the verification noise $w'$, we show that the probabilistic safety lower bound $\safelb$ (i.e., $\frac{\cumulbopt}{M}$) can drop below $1.0$ on both of the {\sf Pendulum} and {\sf DroneInWind}. As losing the $100\%$ safety guarantee, the safety violation number with shield (i.e., numbers in Shield Vio. column) increases. However, when compared with the safety violation number without the shield (i.e., numbers in PPO Vio. column), the shield still provides much fewer safety violations. We also noticed that losing the safety shield with larger noise slightly increased the performance after deploying. On {\sf Pendulum}, the Shield/PPO increases from $2.65$ to $3.14$; the Shield/LQR increases from $8.59$ to $8.67$.  On {\sf DroneInWind}, the Shield/PPO increases from $3.35$ to $4.12$; the Shield/LQR increases from $4.54$ to $5.14$. 

\section{Probabilistic Reachable Analysis}
\label{sec:prob_reach_analisis}

In this section, we analyze probabilistic reachability. Instead of merely analyzing which states are reachable, we finally provide a probabilistic lower bound for $p_t$. The intuition of the proof is that the multiplication between the upper bound of probability density $f_t(s)$ and the over-approximated unsafe area $\int_{\unsafereg \cap \srs{t}} d s$ is greater than the true unsafe probability. Thus, we can compute the upper bound of unsafe probability. This also gives us the lower bound of safe probability.

\subsection{Probability Density Function of Reachable Set}
\label{apdx:pdf}

We represent the reachable set for the noise-free transition $\nftransmat{t} s_0$ as $\rs{t}$. The reachable set for the stochastic transition at step $t$ is $\srs{t}$ as defined in \ref{sec:stochastic_transition_system}. At begining, the noise-free reachable set is identical to the stochastic reachable set, $\rs{0} = \rs{0}^s$. Because noise increases the size of a reachable set, the noise-free reachable set is subset of the stochastic reachable set, $\rs{t} \subseteq \srs{t}$. 

We hope to calculate the safety probability of stochastic reachable sets on continuous space, thus need to find the Probability Density Function (PDF) for a stochastic reachable set $\srs{t}$. Suppose $f_t(s_t)$ is the PDF of the distribution that $s_t$ subjects to. $s_0 \sim \dist{s_0}$ where $\dist{s_0}$ is a uniform distribution, and the stochastic term $w_0 \sim \dist{w}$. Hence, the reachable distribution of any given step is the linear combination of $\dist{s_0}$ and $\dist{w}$ as we showed in Sec.~\ref{sec:stochastic_transition_system}.

Assuming there is a mapping $r$ from $x$ to $y$, $y = r(x)$, and according to the change of variable formula of PDF,
the PDF of $x$ is $f_x$, the PDF of $y$ is
\begin{align*}
    f_y(y)=f_x\left(x \right)\left|\frac{d}{d y} r^{-1}(y)\right|
\end{align*}

\noindent For the multivariate case, when $x, y \subseteq \mathbb{R}^{n}$,
\begin{align}
    \label{eq:multi_variants_change_of_variable}
    f_y({y})=f_x({x})\left|\det\left(\frac{d }{d {y}} r^{-1}(y)\right)\right|
\end{align}

% \begin{theorem}
%     % https://www.randomservices.org/random/dist/Transformations.html#cov4
%     Suppose that ${X}$ is a random variable taking values in $S \subseteq \mathbb{R}^{n}$, and that ${X}$ has a continuous distribution with probability density function $f_x$. Suppose ${Y}=r({X})$ where $r$ is a differentiable function from $S$ to $T \subseteq \mathbb{R}^{m}$. Then the probability density function of ${Y}$ is given by
%     \begin{align}
%         \label{eq:multi_variants_change_of_variable}
%         f_y({y})=f_x({x})\left|\det\left(\frac{d {x}}{d {y}}\right)\right|, \quad y \in T
%     \end{align}
% \end{theorem}

% \begin{proof}
%     The result follows the multivariate change of variables formula in calculus. If $B \subseteq T$ then
%     $$
%         \mathbb{P}({Y} \in B)=\mathbb{P}[r({X}) \in B]=\mathbb{P}\left[{X} \in r^{-1}(B)\right]=\int_{r^{-1}(B)} f_x({x}) d {x}
%     $$
%     Using the change of variables ${x}=r^{-1}({y}), d {x}=\left|\det\left(\frac{d {x}}{d y}\right)\right| d {y}$ we have
%     $$
%         \mathbb{P}({Y} \in B)=\int_{B} f\left[r^{-1}({y})\right]\left|\det\left(\frac{d {x}}{d {y}}\right)\right| d {y}
%     $$
%     So it follows that $g$ defined in the theorem is a PDF for ${Y}$.
% \end{proof}

Let the noise-free state $\hat{s}_t = \nftransmat{t} s_{0}$ and the noise term $w_t = \nmat{t} w_0$. Suppose the PDF of $\rs{t}$ is $g_t(\hat{s}_t)$ and the PDF of distribution that $\nmat{t} w_0$ subjects to is $h_t(w_t)$. The PDF of $s_{0}$'s distribution and $w_0$'s distribution is $g_0(s_0)$ and $h_0(w_0)$ respectively.

\noindent Applying Eq.~\eqref{eq:multi_variants_change_of_variable} to $g_t$, because $s_0 = \nftransmat{t}^{-1} \hat{s}_t$, $\frac{d s_0}{d \hat{s}_t} = \nftransmat{t}^{-1}$, thus,
\begin{align}
    \label{eq:g_st}
    g_t(\hat{s}_t) & = \frac{g_0\left(\nftransmat{t}^{-1} \hat{s}_t \right)}{\left|\det (\nftransmat{t})\right|} \nonumber \\
                   & = \frac{g_0\left(s_0 \right)}{\left|\det (\nftransmat{t})\right|}
\end{align}

\noindent Applying Eq.~\eqref{eq:multi_variants_change_of_variable} to $h_t$,
\begin{align}
    \label{eq:h_st}
    h_t(w_t) & = \frac{h_0\left((\nmat{t})^{-1} w_t \right)}{\left|\det(\nmat{t})\right|} \nonumber \\
             & = \frac{h_0\left( w_0 \right)}{\left|\det(\nmat{t})\right|}
\end{align}

\noindent Eq.~\eqref{eq:g_st} and Eq.~\eqref{eq:h_st} tell us that the PDF of noise-free state $\hat{s}_t$ and noise $w_t$ can be computed with $g_0(s_0)$ and $h_0(w_0)$, which are the PDF of uniform distributions. The PDF of stochastic state $s_t \in \srs{t}$ is the sum of noise-free state $\hat{s}_t$ and noise $w_t$. The PDF of $s_t$ is $f_t(s_t)$,

\begin{align*}
    f_t(s_t) = g_t(\hat{s}_t) + h_t(w_t).
\end{align*}

% \noindent In the case that will not cause confusion, we remove the subscripts of $s_t$ and $w_t$, and write these two equations as
% \begin{align*}
%     g_t(s) & = \frac{g_0\left(\nftransmat{t}^{-1} s \right)}{\left|\det (\nftransmat{t})\right|}  \\
%     h_t(w) & = \frac{h_0\left((\nmat{t})^{-1} w \right)}{\left|\det(\nmat{t})\right|}
% \end{align*}

The sum of two random variables is distributed as the convolution of their probability densities. Thus, $f_t$ is distributed as the convolution of the distributions $g_t$ and $h_t$. Given a domain $\dist{w} = \left\{w | h_t(w) > 0 \right\}$, $ \dist{\hat{s}} = \{\hat{s} | s \in \srs{t}, (s - \hat{s}) \in \dist{w}\}$; we have that
\begin{align}
    \label{eq:f_st}
    f_{t}(s) = (g_t * h_t) (s) = \int_{\hat{s} \in \dist{\hat{s}}} g_t(\hat{s}) h_t(s - \hat{s})\ d\hat{s}
\end{align}

% \begin{figure}[htp]
%     \centering
%     \includegraphics[width=0.4\linewidth]{images/conv_explaination.pdf}
%     \caption{An illustrating example for Eq.~\ref{eq:f_st} in 2-d case. The $\dist{\hat{s}}$ is computed with $\distw$ and $s$. It can be seen as the result of translating $\distw$ with vector $s$.
%         The PDF of $\rs{t}$ is $g_t(s')$, the PDF of $\dist{\hat{s}}$ is $h_t(s'-s)$. $\dist{\hat{s}}$ acts as filter and moves as $s$ changes.}
%     \label{fig:conv_explaination}
% \end{figure}

Given a reachable set $\srs{t}$, we wish to characterize $p_t$, a measure of how many states of $\srs{t}$ are safe:

$$p_t = \int_{s \in {\bar{\mathcal{S}}_u \cap \srs{t}}} f_{t}(s)\ ds$$

\noindent Here, $f_t(s)$ is the PDF of $\srs{t}$, and it depends on  $g_t(s_0)$ and $h_t(w_0)$. $\bar{\mathcal{S}}_u$ is the state set that satisfies the safety properties. When $\srs{t} \subseteq \bar{\mathcal{S}}_u$, $p_t = 1$

\subsection{Upper Bound of Probability Density Function}
\label{apdx:upper_bound_of_pdf}

Now, we consider the upper bound of $f_t(s)$,

\noindent\textbf{Theorem 1.}\textit{
    The $s_0$ subjects to a uniform distribution on $\left[\lb{s_0}, \ub{s_0}\right]$. $\lb{s_0}, \ub{s_0} \in \mathbb{R}^n$; $n$ is the number of state dimensions. Let $\delta = \ub{s_0} - \lb{s_0}$,
    $$f_t(s) \leq \frac{1}{|det(\nftransmat{t})| \prod_{i=0}^{n-1} \delta_i}$$
}

\begin{proof}
    The $s_0$ subjects to a uniform distribution on $\left[\lb{s_{0}}, \ub{s_{0}}\right]$.
    $$g_0(s_0) = \left\{\begin{array}{cr}
            \frac{1}{\prod_{i=0}^{n-1} \delta_i} & s \in \rs{0} \\
            \\
            0                                    & otherwise
        \end{array}\right.$$
    From Eq. \ref{eq:g_st}, we know that
    \begin{align}
        \label{eq:g_st_final}
        g_t(s_t) = \left\{\begin{array}{cr}
            \frac{1}{|det(\nftransmat{t})| \prod_{i=0}^{n-1} \delta_i} & s \in \rs{t} \\
            \\
            0                                                          & otherwise
        \end{array}\right.
    \end{align}
    $\rs{t}$ is the noise-free reachable set.

    \noindent According to Eq.~\eqref{eq:f_st},
    $$
        f_t(s) = \int_{s' \in \dist{s}} g_t(s') h_t(s - s')\ ds'.
    $$
    From Eq.~\eqref{eq:g_st_final},
    $$g_t(s') \leq \frac{1}{\left|\det\left(\nftransmat{t}\right)\right| \prod_{i=0}^{n-1} \delta_{i}}.$$
    Thus,
    $$
        f_t(s) \leq \frac{1}{\left|\det\left(\nftransmat{t}\right)\right| \prod_{i=0}^{n-1} \delta_{i}} \int_{s' \in \dist{s}} h_t(s' - s)\ ds'
    $$
    $h_t$ is a PDF, thus $\int_{s' \in \dist{s}} h_t(s' - s)\ ds' \leq 1$. We proved that
    \begin{align}
        f_t(s) \leq \frac{1}{|det(\nftransmat{t})| \prod_{i=0}^{n-1} \delta_i}
        \label{eq:general_upper_bound}
    \end{align} \qed
\end{proof}

\noindent The \textbf{Theorem 1} can be extended if the noise is subject to the uniform distribution.

\noindent\textbf{Corollary 1.}\textit{
Suppose the noise on every step subjects to a uniform distribution on $[\epsilon'_l, \epsilon'_h]$; $\epsilon'_{l}, \epsilon'_{h} \in \mathbb{R}^{n}$,
\begin{align*}
    f_{t}(s) \leq \min \left(\frac{1}{\left|\det\left(\mathcal{T}_{s_i}\right)\right| \prod_{i=0}^{n-1} \delta_{i}}, \frac{1}{\left|\det\left(\nmat{t}\right)\right| \prod_{i=0}^{n-1} \delta_{i}^{\prime}}\right),
\end{align*}}
where $\delta' = \epsilon'_h - \epsilon'_l$.

\begin{proof}
    We assume the noise subjects to the uniform distribution on $[\epsilon'_l, \epsilon'_h]$. According to Eq. \ref{eq:h_st}, we have
    $$h_t(s - s') = \left\{\begin{array}{cr}
            \frac{1}{\left|det(\nmat{t})\right| \prod_{i=0}^{n-1} \delta'_i} & s' \in \dist{s} \\
            \\
            0                                                                           & otherwise
        \end{array}\right.$$

    \noindent $g_t$ is PDF, its integration is smaller or equal to 1. According to Eq. \ref{eq:f_st},
    \begin{align*}
        f_t(s) & \leq \frac{1}{\left|\det\left(\nmat{t}\right)\right| \prod_{i=0}^{n-1} \delta'_{i}} \int_{s' \in \dist{s}} g_t(s') \ ds' \\
    \end{align*}
    $g_t$ is a PDF, thus $\int_{s' \in \dist{s}} h_t(s' - s)\ ds' \leq 1$
    \begin{align}
        \label{eq:uniform_noise_upper_bound}
        f_t(s) & \leq \frac{1}{\left|\det\left(\nmat{t}\right)\right| \prod_{i=0}^{n-1} \delta'_{i}}
    \end{align}
    Merge the conclusion in Eq.~\eqref{eq:general_upper_bound} and Eq.~\eqref{eq:uniform_noise_upper_bound},
    \begin{align}
        \label{eq:merged_upper_bound}
        f_{t}(s) \leq \min \left(\frac{1}{\left|\det\left(\nftransmat{t}\right)\right| \prod_{i=0}^{n-1} \delta_{i}}, \frac{1}{\left|\det\left(\nmat{t}\right)\right| \prod_{i=0}^{n-1} \delta_{i}^{\prime}}\right)
    \end{align} \qed
\end{proof}

\subsection{Compute Overapproximation of $\int_{\unsafereg \cap \srs{t}} d s$}
\label{apdx:reachable_approximation}
Since we have the upper bound of $f_t(s)$, if we know which part of $\srs{t}$ violates the safety constraints (i.e., $\unsafereg \cap \srs{t}$), we can integrate the upper bound of $f_t(s)$ on this unsafe part and compute the upper bound of unsafe probability. However, computing exact $\unsafereg \cap \srs{t}$ is difficult as the dimension grows, so we computed the overappoximated $\unsafereg \cap \srs{t}$ with $\srs{t}$'s over-approximation $\mathcal{A}(\srs{t}) \subset \mathbb{R}^n$. The $\mathcal{A}(\srs{t})$ is in the form that $\mathcal{A}(\srs{t}) = \{s| \mathcal{A}_l(\srs{t}) < s < \mathcal{A}_u(\srs{t}) \}$, where $\mathcal{A}_l(\srs{t}), \mathcal{A}_u(\srs{t}) \in \mathbb{R}^n$ are two vectors.

\begin{theorem}
    \label{trm: concretize}
    $\forall \mathcal{T} \in \mathbb{R}^{n \times n}, s \in \mathbb{R}^{n}$,
    \begin{align*}
        \mathcal{T}^{\geq 0} \lb{s} + \mathcal{T}^{< 0} \ub{s} \leq \mathcal{T}s \leq \mathcal{T}^{\geq 0} \ub{s} + \mathcal{T}^{< 0} \lb{s}
    \end{align*}
    Where $\mathcal{T}^{\geq 0}_{ij} = \max(\mathcal{T}_{ij}, 0)$ and $\mathcal{T}^{< 0}_{ij} = \min(\mathcal{T}_{ij}, 0)$. $\lb{s}, \ub{s}$ is the upper and lower boundary of $s$ respectively.
\end{theorem}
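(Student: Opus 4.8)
The plan is to reduce the matrix inequality to scalar inequalities, one per coordinate, and to dispatch each of these by a case split on the sign of the corresponding entry of $\mathcal{T}$. All the relations here are meant componentwise, so I would first fix the reading of the statement that makes it correct: $\mathcal{L}(s)$ and $\mathcal{U}(s)$ are the lower and upper corners of the interval enclosing $s$, i.e. $\mathcal{L}(s)_j \leq s_j \leq \mathcal{U}(s)_j$ for every coordinate $j$, and $\mathcal{T}^{\geq 0}$, $\mathcal{T}^{<0}$ are the entrywise nonnegative and negative parts of $\mathcal{T}$, so $\mathcal{T} = \mathcal{T}^{\geq 0} + \mathcal{T}^{<0}$. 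It then suffices to prove, for each row index $i$,
\begin{align*}
  \sum_{j}\bigl(\mathcal{T}^{\geq 0}_{ij}\mathcal{L}(s)_j + \mathcal{T}^{<0}_{ij}\mathcal{U}(s)_j\bigr) \;\leq\; \sum_{j}\mathcal{T}_{ij}s_j \;\leq\; \sum_{j}\bigl(\mathcal{T}^{\geq 0}_{ij}\mathcal{U}(s)_j + \mathcal{T}^{<0}_{ij}\mathcal{L}(s)_j\bigr),
\end{align*}
because the two outer sums are precisely the $i$-th components of $\mathcal{T}^{\geq 0}\mathcal{L}(s) + \mathcal{T}^{<0}\mathcal{U}(s)$ and $\mathcal{T}^{\geq 0}\mathcal{U}(s) + \mathcal{T}^{<0}\mathcal{L}(s)$, while the middle sum is $(\mathcal{T}s)_i$.

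Next I would prove the bound entrywise. Fix $i$ and $j$. If $\mathcal{T}_{ij} \geq 0$, multiplying $\mathcal{L}(s)_j \leq s_j \leq \mathcal{U}(s)_j$ through by $\mathcal{T}_{ij}$ preserves the order; and since $\mathcal{T}^{\geq 0}_{ij} = \mathcal{T}_{ij}$, $\mathcal{T}^{<0}_{ij} = 0$ in this case, the resulting chain is exactly $\mathcal{T}^{\geq 0}_{ij}\mathcal{L}(s)_j + \mathcal{T}^{<0}_{ij}\mathcal{U}(s)_j \leq \mathcal{T}_{ij}s_j \leq \mathcal{T}^{\geq 0}_{ij}\mathcal{U}(s)_j + \mathcal{T}^{<0}_{ij}\mathcal{L}(s)_j$. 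If $\mathcal{T}_{ij} < 0$, the same multiplication reverses the order, but now $\mathcal{T}^{\geq 0}_{ij} = 0$, $\mathcal{T}^{<0}_{ij} = \mathcal{T}_{ij}$, and the swapped endpoints coincide with those same two expressions, so the displayed per-$(i,j)$ chain holds again. Summing over $j$ yields the row inequality above, and since $i$ was arbitrary the theorem follows.

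The argument is elementary, so there is no genuine obstacle; the only point demanding care is bookkeeping — keeping the decomposition $\mathcal{T} = \mathcal{T}^{\geq 0} + \mathcal{T}^{<0}$ and the (apparently transposed) labelling of $\mathcal{L}(s)$, $\mathcal{U}(s)$ in the statement straight — which is why I would pin down the conventions explicitly at the outset. If a stronger statement is wanted, one can add that each bound is attained by choosing coordinatewise $s_j = \mathcal{L}(s)_j$ where $\mathcal{T}_{ij} \geq 0$ and $s_j = \mathcal{U}(s)_j$ otherwise (and dually for the upper bound), so $\bigl[\mathcal{T}^{\geq 0}\mathcal{L}(s) + \mathcal{T}^{<0}\mathcal{U}(s),\ \mathcal{T}^{\geq 0}\mathcal{U}(s) + \mathcal{T}^{<0}\mathcal{L}(s)\bigr]$ is in fact the tightest interval box enclosing $\{\mathcal{T}s' : \mathcal{L}(s) \leq s' \leq \mathcal{U}(s)\}$; tightness is not needed for the stated inequality, but it justifies adopting this formula as the concretization/over-approximation operator $\mathcal{A}(\cdot)$ used in this section.
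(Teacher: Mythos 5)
Your proof is correct and takes essentially the same route as the paper's: an entrywise case split on the sign of $\mathcal{T}_{ij}$, multiplying the componentwise bounds $\mathcal{L}(s)_j \leq s_j \leq \mathcal{U}(s)_j$ and summing over $j$. Your explicit handling of the decomposition $\mathcal{T} = \mathcal{T}^{\geq 0} + \mathcal{T}^{<0}$ and the note on the transposed labelling of $\mathcal{L}(s)$, $\mathcal{U}(s)$ in the statement are welcome clarifications but do not change the argument.
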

\begin{proof}
    Let $s' = \mathcal{T}s$, $s'_i$ is the $i$-th element of $s' \in \mathbb{R}^{n}$. $s$ is the input vector and $s'$ is the output vector.
    $$
        s'_i = \sum_{j=0}^{n-1} \mathcal{T}_{ij} s_j
    $$
    When $\mathcal{T}_{ij} \geq 0$,
    $$\mathcal{T}_{ij} s_j \in [\mathcal{T}_{ij}\lb{s_j}, \mathcal{T}_{ij}\ub{s_j}].$$
    When $\mathcal{T}_{ij} < 0$,
    $$\mathcal{T}_{ij} s_j \in [\mathcal{T}_{ij}\ub{s_j}, \mathcal{T}_{ij}\lb{s_j}].$$
    $\lb{s_j}$ is the lower bound of $s_j$ and $\ub{s_j}$ is the upper bound of $s_j$. Considering the lower bound of output $s'$, when $\mathcal{T}_{ij} \geq 0$, $\mathcal{T}_{ij} s_j \geq \mathcal{T}_{ij} \lb{s_j}$; when $\mathcal{T}_{ij} < 0$, $ \mathcal{T}_{ij} s_j \geq \mathcal{T}_{ij} \ub{s_j}$, as a result,
    $$
        \mathcal{T}s \geq \mathcal{T}^{\geq 0} \lb{s} + \mathcal{T}^{< 0} \ub{s}
    $$
    Similarly, for $\ub{s'}$, when $\mathcal{T}_{ij} \geq 0$, $\mathcal{T}_{ij} s_j \leq \mathcal{T}_{ij} \ub{s_j}$; when $\mathcal{T}_{ij} < 0$, $ \mathcal{T}_{ij} s_j \leq \mathcal{T}_{ij} \lb{s_j}$,
    $$
        \mathcal{T}s \leq \mathcal{T}^{\geq 0} \ub{s} + \mathcal{T}^{< 0} \lb{s}
    $$ \qed
\end{proof}

The stochastic reachable set $\srs{t}$ is stored as $(\nftransmat{t}, \nmat{t}, \lb{s_0}, \ub{s_0}, \lb{w}, \ub{w})$. We can compute its over-approximation $\mathcal{A}(\srs{t})$ with $\nftransmat{t}$ and $\nmat{t}$. First, we compute the over-approximation for the noise-free reachable set $\rs{t}$ with $\nftransmat{t}$. Supposing the initial state space is $[\lb{s_0}, \ub{s_0}]$, according to Theorem~\ref{trm: concretize} and Eq.~\eqref{eq:stochastic_linear_transition},
\begin{align*}
    \mathcal{A}_l(\rs{t}) & = \nftransmat{t}^{\geq 0} \lb{s_0} + \nftransmat{t}^{< 0} \ub{s_0}  \\
    \mathcal{A}_u(\rs{t}) & = \nftransmat{t}^{\geq 0} \ub{s_0} + \nftransmat{t}^{< 0} \lb{s_0}.
\end{align*}
Supposing the noise is bounded by $[\lb{w}, \ub{w}]$, $w$ is a set containing all the possible states of $\nmat{t} w$,

\begin{align*}
    \mathcal{A}_l(w) & = \nmat{t}^{\geq 0} \lb{w} + \nmat{t}^{< 0} \ub{w}, \\
    \mathcal{A}_u(w) & = \nmat{t}^{\geq 0} \ub{w} + \nmat{t}^{< 0} \lb{w}.
\end{align*}

We can compute the $\mathcal{A}(\srs{t})$ by adding $\mathcal{A}(\rs{t})$ and $\mathcal{R}(w)$. Because $\mathcal{A}(\rs{t})$ and $\mathcal{R}(w)$ are two intervals,

\begin{align*}
    \mathcal{A}_l(\srs{t}) & = \mathcal{A}_l(\rs{t}) + \mathcal{A}_l(w) = \nftransmat{t}^{\geq 0} \lb{s_0}+\nftransmat{t}^{<0} \ub{s_0} + \nmat{t}^{\geq 0} \lb{w}+\nmat{t}^{<0} \ub{w} \\
    \mathcal{A}_u(\srs{t}) & = \mathcal{A}_u(\rs{t}) + \mathcal{A}_u(w) = \nftransmat{t}^{\geq 0} \ub{s_0}+\nftransmat{t}^{<0} \lb{s_0} + \nmat{t}^{\geq 0} \ub{w}+\nmat{t}^{<0} \lb{w}
\end{align*} \qed

We assume that the safety constraints are defined as a rectangle. Thus, our following analysis is based on Assumption~\ref{asp: s_u}.

\begin{assumption}
    \label{asp: s_u}
    Given constant lower bound $L \in \mathbb{R}^n$ and constant upper bound $U \in \mathbb{R}^n$ for safe region. The unsafe state set $\unsafereg$ is in the form of
    \begin{align*}
        \unsafereg = \{s \in \mathbb{R}^n | s > U \lor s < L\}
    \end{align*}
\end{assumption}

\begin{theorem}
    \label{trm: int_upper_bound}
    Given lower bound $L \in \mathbb{R}^n$ and upper bound $U \in \mathbb{R}^n$ of the safe region, $\mathtt{clip}(\cdot, L, U)$ bounds the input between the $L$ and $U$.
    \begin{align*}
         & \mathbf{b_1} = \mathcal{A}_u(\srs{t}) - \mathcal{A}_l(\srs{t})                                                                    \\
         & \mathbf{b_2} = \mathtt{clip}(\mathcal{A}_u(\srs{t}), L, U) - \mathtt{clip}(\mathcal{A}_l(\srs{t}), L, U)                          \\
         & \int_{\unsafereg \cap \srs{t}} ds \leq {\prod_{i=0}^{n-1}{(\mathbf{b_1})_i}-\prod_{i=0}^{n-1}{(\mathbf{b_2})_i}}
    \end{align*}
\end{theorem}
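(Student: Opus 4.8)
The plan is to replace the stochastic reachable set by its interval over-approximation $\mathcal{A}(\mathcal{R}^s_t)$ and reduce the whole computation to volumes of axis-aligned boxes. \emph{Step one:} observe that $\mathcal{R}^s_t \subseteq \mathcal{A}(\mathcal{R}^s_t)$. Every $s \in \mathcal{R}^s_t$ is of the form $\mathcal{T}_{s_t} s_0 + \mathcal{T}_{w_t} w$ with $s_0 \in [\mathcal{L}_{s_0}, \mathcal{U}_{s_0}]$ and $w \in [\varepsilon'_l, \varepsilon'_u]$; applying Theorem~\ref{trm: concretize} to each of the two linear images and adding the resulting intervals gives exactly the box $[\mathcal{A}_l(\mathcal{R}^s_t), \mathcal{A}_u(\mathcal{R}^s_t)]$ constructed just above. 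Since the integrand is constant ($\equiv 1$) and $\mathcal{S}_u$ is fixed, monotonicity of the Lebesgue integral yields $\int_{\mathcal{S}_u \cap \mathcal{R}^s_t} ds \le \int_{\mathcal{S}_u \cap \mathcal{A}(\mathcal{R}^s_t)} ds$, so it suffices to upper-bound the latter.

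\emph{Step two:} by Assumption~\ref{asp: s_u} the safe region $\bar{\mathcal{S}}_u$ is the closed box $\prod_{i=0}^{n-1}[L_i, U_i]$ (the strict-versus-nonstrict distinction affects only a null set), hence $\mathcal{S}_u \cap \mathcal{A}(\mathcal{R}^s_t)$ equals $\mathcal{A}(\mathcal{R}^s_t) \setminus \bar{\mathcal{S}}_u$ up to measure zero, so $\int_{\mathcal{S}_u \cap \mathcal{A}(\mathcal{R}^s_t)} ds = \operatorname{vol}\!\big(\mathcal{A}(\mathcal{R}^s_t)\big) - \operatorname{vol}\!\big(\mathcal{A}(\mathcal{R}^s_t) \cap \bar{\mathcal{S}}_u\big)$. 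The first term is the volume of the box with side-length vector $\mathcal{A}_u(\mathcal{R}^s_t) - \mathcal{A}_l(\mathcal{R}^s_t) = \mathbf{b_1}$, i.e.\ $\prod_{i=0}^{n-1}(\mathbf{b_1})_i$.

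\emph{Step three} (the only fiddly part): writing $a \equiv \mathcal{A}_l(\mathcal{R}^s_t)$ and $b \equiv \mathcal{A}_u(\mathcal{R}^s_t)$, I would identify $\mathcal{A}(\mathcal{R}^s_t) \cap \bar{\mathcal{S}}_u$ with the box whose corners are $\mathtt{clip}(a, L, U)$ and $\mathtt{clip}(b, L, U)$. The intersection of two axis-aligned boxes $\prod_i[a_i,b_i]$ and $\prod_i[L_i,U_i]$ is the box with lower corner $\max(a,L)$ and upper corner $\min(b,U)$ (componentwise), whose $i$-th side length is $\max\!\big(0,\ \min(b_i,U_i) - \max(a_i,L_i)\big)$. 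Using that $\mathtt{clip}(x,L_i,U_i) = \max(L_i,\min(x,U_i))$ is monotone non-decreasing in $x$ and that $a \le b$, one checks componentwise that $\mathtt{clip}(b_i,L_i,U_i) - \mathtt{clip}(a_i,L_i,U_i)$ is nonnegative and equals that side length in every case — in particular it collapses to $0$ in the degenerate configurations $a_i > U_i$ (both clips equal $U_i$) and $b_i < L_i$ (both clips equal $L_i$), which is exactly when the $i$-th slab of $\mathcal{A}(\mathcal{R}^s_t)$ misses $[L_i,U_i]$. Therefore $\operatorname{vol}\!\big(\mathcal{A}(\mathcal{R}^s_t)\cap\bar{\mathcal{S}}_u\big) = \prod_{i=0}^{n-1}(\mathbf{b_2})_i$, and chaining the three steps gives $\int_{\mathcal{S}_u \cap \mathcal{R}^s_t} ds \le \prod_{i=0}^{n-1}(\mathbf{b_1})_i - \prod_{i=0}^{n-1}(\mathbf{b_2})_i$.

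The main obstacle is precisely step three: a short case analysis is needed to confirm that the $\mathtt{clip}$-based formula for $\mathbf{b_2}$ reproduces the true side lengths of the box intersection in every configuration — including when $\mathcal{A}(\mathcal{R}^s_t)$ lies partly or wholly outside the safe box so that a coordinate of $\mathbf{b_2}$ must vanish — and to verify $\mathbf{b_2} \ge 0$ (which follows from monotonicity of $\mathtt{clip}$ together with $\mathcal{A}_l(\mathcal{R}^s_t) \le \mathcal{A}_u(\mathcal{R}^s_t)$). Everything else — the containment $\mathcal{R}^s_t \subseteq \mathcal{A}(\mathcal{R}^s_t)$ via Theorem~\ref{trm: concretize}, monotonicity and additivity of integral/volume, and the product formula for the volume of a box — is routine.
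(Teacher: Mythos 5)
Your proof is correct and matches the paper's intended argument: the paper itself offers no formal proof of Theorem~\ref{trm: int_upper_bound}, only the ``yellow frame'' illustration in Fig.~\ref{fig: apply_lower_bound}, which is precisely your decomposition $\operatorname{vol}(\mathcal{A}(\mathcal{R}^s_t)) - \operatorname{vol}(\mathcal{A}(\mathcal{R}^s_t) \cap \bar{\mathcal{S}}_u)$. Your step three --- the componentwise case analysis verifying that $\mathtt{clip}(b_i,L_i,U_i)-\mathtt{clip}(a_i,L_i,U_i)$ equals $\max\bigl(0,\ \min(b_i,U_i)-\max(a_i,L_i)\bigr)$, including the degenerate cases where a coordinate of $\mathbf{b_2}$ must vanish --- is exactly the detail the paper leaves implicit, so your write-up is a faithful and complete formalization rather than a different route.
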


Theorem~\ref{trm: int_upper_bound} says the value of $\int_{\unsafereg \cap \srs{t}} ds$ is smaller than its over-approximation $\prod_{i=0}^{n-1}\left(\mathbf{b}_{1}\right)_{i}-\prod_{i=0}^{n-1}\left(\mathbf{b}_{2}\right)_{i}$. An illustrating example is in Fig.~\ref{fig:apply_lower_bound}. $\prod_{i=0}^{n-1}\left(\mathbf{b}_{1}\right)_{i}-\prod_{i=0}^{n-1}\left(\mathbf{b}_{2}\right)_{i}$ represents the area of the yellow frame.

We do not compute the $\int_{\unsafereg \cap \srs{t}} d s$ directly, but compute its upper bound with an over-approximation. Such approximation can be useful when the exact reachable set is expensive to compute in the high dimension case. All the operations can be done with simple matrix operations straightforwardly, which are highly optimized on modern software and hardware.

\subsection{Lower Bound of $p_t$}
\begin{figure}[htb]
    \centering
    \includegraphics[width=0.7\linewidth]{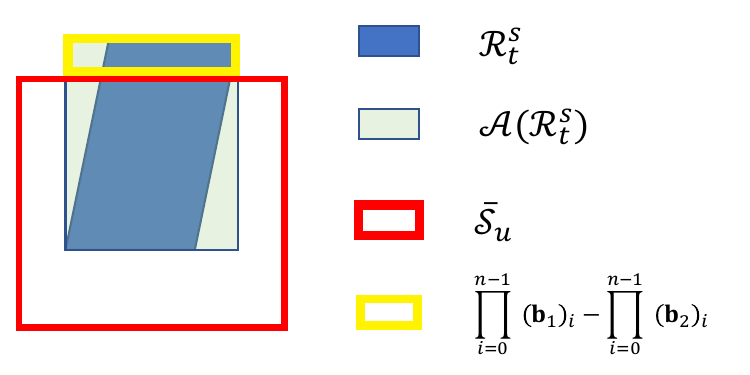}
    \caption{A demo for Theorem~\ref{trm: reachability_low_bound}. $\srs{t}$ is the reachable set at step $t$; $\mathcal{A}(\srs{t})$ is the over-approximation of $\mathcal{R}_s^t$; $\safereg$ is the safe state space; $\prod_{i=0}^{n-1}\left(\mathbf{b}_{1}\right)_{i}-\prod_{i=0}^{n-1}\left(\mathbf{b}_{2}\right)_{i}$ is the area of the yellow frame, and it is used as upper bound of $\int_{\unsafereg \cap \srs{t}} d s$. }
    \label{fig:apply_lower_bound}
\end{figure}

\label{apdx:prob_lb}
With the Theorem~\ref{trm:upper_bound} or Corollary~\ref{trm:uniform_noise_upper_bound}, we can get the upper bound $\rpub$ of $f_t(s)$. $$\rpub = \frac{1}{\left|\det\left(\nftransmat{t}\right)\right| \prod_{i=0}^{n-1} \delta_{i}}$$

\noindent If the noise subjects to uniform distribution,
$$\rpub = \min \left(\frac{1}{\left|\det\left(\nftransmat{t}\right)\right| \prod_{i=0}^{n-1} \delta_{i}}, \frac{1}{\left|\det\left(\nmat{t}\right)\right| \prod_{i=0}^{n-1} \delta_{i}^{\prime}}\right)$$

The intersection between the reachable set $\srs{t}$ and the unsafe state set $\unsafereg$ is ${\unsafereg \cap \srs{t}}$.  Then,
$$
    p_t \geq 1 - \rpub \int_{\unsafereg \cap \srs{t}} ds
$$

\noindent Theorem~{\ref{trm: int_upper_bound}} gives us that $\int_{\unsafereg \cap \srs{t}} d s \leq \prod_{i=0}^{n-1}\left(\mathbf{b}_{1}\right)_{i}-\prod_{i=0}^{n-1}\left(\mathbf{b}_{2}\right)_{i}$. Thus, we get the Theorem~\ref{trm: reachability_low_bound}.

\begin{theorem}
    \label{trm: reachability_low_bound}
    \begin{align*}
        p_t & \geq 1 -  \rpub \cdot \left( \prod_{i=0}^{n-1}\left(\mathbf{b}_{1}\right)_{i}-\prod_{i=0}^{n-1}\left(\mathbf{b}_{2}\right)_{i} \right)
    \end{align*}
\end{theorem}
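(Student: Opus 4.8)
The plan is to treat Theorem~\ref{trm: reachability_low_bound} as a short corollary that chains together the two preceding results, Theorem~\ref{trm: upper_bound}/Corollary~\ref{corollary: upper_bound_uniform} and Theorem~\ref{trm: int_upper_bound}. The starting point is the identity relating $p_t$ to the measure of the unsafe portion of the reachable set: since $f_t$ is the probability density of $s_t$ and, by the construction in Appendix~\ref{apdx:pdf}, is supported inside $\mathcal{R}_t^s$, we have $\int_{\mathcal{R}_t^s} f_t(s)\,ds = 1$, hence
$$p_t = \int_{\bar{\mathcal{S}}_u \cap \mathcal{R}_t^s} f_t(s)\,ds = 1 - \int_{\mathcal{S}_u \cap \mathcal{R}_t^s} f_t(s)\,ds,$$
so it suffices to upper-bound $\int_{\mathcal{S}_u \cap \mathcal{R}_t^s} f_t(s)\,ds$.

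First I would apply Theorem~\ref{trm: upper_bound} (or Corollary~\ref{corollary: upper_bound_uniform} when the noise is uniform) to replace the integrand by its pointwise maximum $U_f$, giving $\int_{\mathcal{S}_u \cap \mathcal{R}_t^s} f_t(s)\,ds \leq U_f \int_{\mathcal{S}_u \cap \mathcal{R}_t^s} ds$ by monotonicity of the integral. Next I would invoke Theorem~\ref{trm: int_upper_bound}, which bounds the Lebesgue volume of the unsafe part of the reachable set by $\prod_{i=0}^{n-1}(\mathbf{b}_1)_i - \prod_{i=0}^{n-1}(\mathbf{b}_2)_i$, i.e. the volume of the over-approximating box $\mathcal{A}(\mathcal{R}_t^s)$ minus the volume of its safe sub-box. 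Substituting both bounds into the identity above yields exactly $p_t \geq 1 - U_f\left(\prod_{i=0}^{n-1}(\mathbf{b}_1)_i - \prod_{i=0}^{n-1}(\mathbf{b}_2)_i\right)$.

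Because both ingredients are already established, the only things I would be careful about are the soundness ``glue'' between them: that $U_f$ is genuinely an upper bound on $f_t$ at \emph{every} point of $\mathcal{S}_u \cap \mathcal{R}_t^s$ (which holds since the bounds of Theorem~\ref{trm: upper_bound} and Corollary~\ref{corollary: upper_bound_uniform} are uniform in $s$), and that $\mathcal{S}_u \cap \mathcal{R}_t^s \subseteq \mathcal{S}_u \cap \mathcal{A}(\mathcal{R}_t^s)$ so that the box bound of Theorem~\ref{trm: int_upper_bound} really applies to the true intersection. The latter containment is where I expect the main (still modest) obstacle: one must confirm $\mathcal{R}_t^s \subseteq \mathcal{A}(\mathcal{R}_t^s)$, that is, that the interval-arithmetic over-approximation derived via Theorem~\ref{trm: concretize} is sound under both the initial-state transformation $\mathcal{T}_{s_t}$ and the noise transformation $\mathcal{T}_{w_t}$, and that summing the two interval hulls over-approximates the Minkowski sum defining $\mathcal{R}_t^s$ in Eq.~\ref{eq:stochastic_linear_transition}. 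Once that containment is in hand, monotonicity of the integral closes the argument.
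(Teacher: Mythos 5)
Your proposal is correct and follows essentially the same route as the paper: the paper's argument in Appendix~\ref{apdx:prob_lb} likewise writes $p_t \geq 1 - U_f \int_{\mathcal{S}_u \cap \mathcal{R}_t^s} ds$ via the pointwise bound from Theorem~\ref{trm: upper_bound} (or Corollary~\ref{corollary: upper_bound_uniform}) and then substitutes the volume bound from Theorem~\ref{trm: int_upper_bound}. Your added care about the normalization identity and the containment $\mathcal{R}_t^s \subseteq \mathcal{A}(\mathcal{R}_t^s)$ only makes explicit what the paper leaves implicit.
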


In Fig \ref{fig:apply_lower_bound}, we provide a demo about how the Theorem \ref{trm: reachability_low_bound} works. Given a stochastic reachable set $\mathcal{R}_s^t$, we compute the over-approximation $\mathcal{A}\left(\srs{t}\right)$. The safety region is defined as $\overline{S}_u$. We can compute the area of $\mathcal{A}(\srs{t}) \cap \unsafereg$ with $\prod_{i=0}^{n-1}{(\mathbf{b_1})_i}-\prod_{i=0}^{n-1}{(\mathbf{b_2})_i}$. Because the probability density function $f_t(s) \leq \rpub$, the cumulative probability for the yellow-wrapped region is upper bounded by $\rpub \cdot (\prod_{i=0}^{n-1}{(\mathbf{b_1})_i}-\prod_{i=0}^{n-1}{(\mathbf{b_2})_i})$. Thus, we can know that the cumulative probability for these safe states is lower bounded by $1 - \rpub \cdot (\prod_{i=0}^{n-1}{(\mathbf{b_1})_i}-\prod_{i=0}^{n-1}{(\mathbf{b_2})_i})$.

\end{document}